\documentclass[11pt,a4paper]{article}

\usepackage[tbtags]{amsmath}

\usepackage[DIV=12]{typearea}

\usepackage{lineno}  
\usepackage[
oldstylenums,
]{kpfonts}  

\usepackage[scr=boondoxupr]{mathalfa} 
\makeatletter 
\newcommand\amp{{\if\f@series b\usefont{T1}{cmr}{bx}{it}\else\usefont{T1}{cmr}{m}{it}\fi\& }}
\makeatother

\usepackage[english]{babel}

\usepackage{scrextend} 

\usepackage{setspace}

\usepackage{physics}
\usepackage{mathdots}

\usepackage{url} 
\usepackage[pagebackref, bookmarks=false,]{hyperref} 
\hypersetup{
	pdftitle={Superintegrability of  discrete-time rational Ruijsenaars-Schneider model and deformed polynomial symmetry algebras},
    pdfauthor={Pavel Drozdov},
    colorlinks,
    citecolor=brightviolet,
    filecolor=black,
    linkcolor=brightviolet,
    urlcolor=brightviolet,
    final=true
} 

\usepackage[dvipsnames]{xcolor} 
\definecolor{my-background}{RGB}{255, 242, 204}
\definecolor{brightviolet}{RGB}{148,0,211}  

\usepackage{tikz}
\usetikzlibrary{arrows,matrix,positioning,arrows.meta}
\usetikzlibrary{decorations.pathreplacing, calc,fit}
\usepackage{tikzsymbols} 
\usepackage{blkarray}
\usepackage{tikz-cd} 
\usepackage{microtype}

\usepackage{anyfontsize}

\allowdisplaybreaks  

\usepackage{mathbbol}
\usepackage{braket}
\usepackage{tensor} 

\usepackage{stmaryrd} 
\SetSymbolFont{stmry}{bold}{U}{stmry}{m}{n}

\DeclareFontFamily{U}{matha}{\hyphenchar\font45}
\DeclareFontShape{U}{matha}{m}{n}{
      <5> <6> <7> <8> <9> <10> gen * matha
      <10.95> matha10 <12> <14.4> <17.28> <20.74> <24.88> matha12
}{}
\DeclareSymbolFont{matha}{U}{matha}{m}{n}

\DeclareMathSymbol{\Asterisk}{\mathbin}{matha}{"06}
\renewcommand{\ast}{\Asterisk}

\usepackage[tiny]{titlesec}
\usepackage{titletoc}  
\titleformat{\section}
  {\normalfont \bfseries \scshape }{\thesection. }{0.1em}{}

\titlecontents{section}[0pt]{\addvspace{1ex}}{\thecontentslabel. \normalfont \bfseries \scshape}
{\Large\bfseries}{\hspace{1em plus 1fill}\large\bfseries \contentspage}

\titleformat{\subsection}
  {\normalfont \bfseries}{\thesubsection. }{0.1em}{}
  \titleformat{\subsubsection}
  {\normalfont \small \bfseries}{\thesubsubsection. }{0.1em}{}
  
  \renewcommand{\thesection}{\Roman{section}}

\newcommand{\gl}{\mathfrak{gl}}

\DeclareMathOperator{\diag}{diag}

\renewcommand{\imath}{\mathrm{i}}

\renewcommand{\vec}[1]{\boldsymbol{#1}} 

\renewcommand{\bar}{\overline}
\newcommand{\vecbar}[1]{\vec{\bar{#1}}}

\newcommand{\ubar}[1]{\mkern 1.5mu\underline{\mkern-1.5mu#1\mkern-1.5mu}\mkern 1.5mu}

\newcommand{\G}{\widetilde{G}}

\usepackage{bbold} 

\newcommand{\R}{\mathbb{R}}
\newcommand{\Z}{\mathbb{Z}}

\newcommand{\Alg}{\mathcal{A}}

\newcommand{\1}{\mathbb{1}}

\newcommand{\M}{\mathcal{M}}

\newcommand{\genset}{\mathcal{S}}

\newcommand{\calH}{\mathcal{H}}
\newcommand{\calP}{\mathcal{P}}
\newcommand{\calB}{\mathcal{B}}

\newcommand{\Bell}{B}

\newcommand{\K}{\widetilde{K}}

\newcommand{\CM}{\text{CM}}

\newcommand{\hyphen}{\hbox{--}}
\newcommand{\ie}{\textit{i.e.}}
\newcommand{\mm}{\textit{mutatis mutandis}} 
\newcommand{\eg}{\textit{e.g.}}
  
\newcommand{\cf}{\textit{cf.}}

\usepackage[shortlabels]{enumitem} 

\setlist[enumerate,1]{label={(\roman*)}}
 
\setlist[itemize,2]{label={$\circ$}}

\setlist[itemize]{itemsep=2pt, parsep=0pt, topsep=2pt}

\usepackage[normalem]{ulem}  

\usepackage{amsthm}
\usepackage{thmtools}
\usepackage[]{cleveref} 

\crefname{prop}{proposition}{propositions} 
\Crefname{equation}{Eq.}{Eqs.}

\newtheorem{theorem}{Theorem}[section]
\newtheorem*{theorem*}{Theorem}

\newtheorem{corollary}{Corollary}[theorem]

\newtheorem{prop}[theorem]{Proposition}
\theoremstyle{definition}
\newtheorem{definition}[theorem]{Definition}

\theoremstyle{remark}
\newtheorem{remark}[theorem]{Remark}
\newtheorem{example}[theorem]{Example}

\makeatletter
\renewcommand{\fnum@figure}{Fig.\,\thefigure}

\makeatletter
\renewenvironment{proof}[1][\relax]{\par
  \pushQED{\qed}%
  \normalfont \topsep6\p@\@plus6\p@\relax
  \trivlist
  \item[\hskip\labelsep\itshape
    \ifx#1\relax \proofname\else\proofname{} of 
    #1\fi\@addpunct{.}]\ignorespaces
}{%
  \popQED\endtrivlist\@endpefalse
}
\makeatother

\usepackage[auth-sc,affil-it]{authblk}

\title{Superintegrability of  discrete-time rational Ruijsenaars--Schneider model and deformed \\ polynomial symmetry algebras}

\author[1,2,$\ast$]{Pavel Drozdov}

\makeatletter
\renewcommand\AB@affilsepx{\\\vspace{0.5em}}
\makeatother

\affil[1]{
  Universit\`a degli Studi di Udine, 33100 Udine, Italy}    
\affil[2]{INFN Sezione di Trieste, 34127 Trieste, Italy}

\bgroup

\affil[ ]{\footnotesize \ttfamily \emph{e-mail:}
     \emph{\href{mailto:drozdov.pavel@spes.uniud.it}{drozdov.pavel@spes.uniud.it} 
      } }
\egroup 

\date{\vspace{-7ex}} 

\numberwithin{equation}{section}

\usepackage{lipsum}

\begin{document}
 
\maketitle 
\begin{abstract}

\noindent We explicitly construct the additional integrals of motion, ensuring maximal superintegrability of the discrete-time rational Ruijsenaars--Schneider model. Using them, we investigate the algebraic aspects of superintegrability in both continuous- and discrete-time settings. In particular, we determine the complete structures of the polynomial symmetry algebras associated with both the rational Ruijsenaars--Schneider model and its discretization. We demonstrate that discretization leads to a nontrivial deformation of the continuous symmetry algebra with respect to the discretization parameter, thereby extending recent analogous results from the rational Calogero--Moser system to its relativistic generalization.  
  \end{abstract}

\tableofcontents

{\renewcommand{\thefootnote}{$\ast $}\footnotetext{Visiting Osaka Metropolitan University, 558-8585 Osaka, Japan}} 

\newpage

\section{Introduction}
\label{sec:intro}

Among all generic dynamical systems, certain models are notable for their regular and predictable behavior --- a property formalized as integrability. This regularity is deeply connected with the existence of underlying symmetries (integrals of motion) of a given model. In this context, \emph{(maximally) superintegrable systems} admit the maximum possible number of symmetries, providing analytic and algebraic solvability~\cite{MPW}. Famous examples of superintegrable models include the isotropic harmonic oscillator, the Kepler--Coulomb model, and the Calogero--Moser system. Although superintegrable models form a narrow subclass of all dynamical systems, they frequently appear across applications: from theoretical physics to the theory of special functions; see, \eg{,}~\cite{MPW,KalninsMillerPost2013}.      

A distinctive feature of superintegrable systems is their nontrivial algebraic structure: while Liouville integrability is characterized by an abelian algebra of integrals in involution, in the maximally superintegrable case, the Poisson brackets between certain integrals no longer vanish. Consequently, one can introduce a \emph{symmetry algebra} associated with a system, which is typically a polynomial Poisson algebra, generated by the integrals of motion. This symmetry algebra is a powerful tool to characterize the system's dynamics and properties; in particular, it is  used for the classification of superintegrable systems~\cite{Kress2007}. Furthermore, in a maximally superintegrable case, one can, in principle, determine the system's behavior (\ie{}, classical trajectories or quantum spectra) purely algebraically, bypassing the need to integrate Hamilton's or Schr\"odinger's differential equations~\cite{MPW}.

The notions of integrability and superintegrability can be extended to \emph{discrete-time} dynamical systems, \ie{}, models whose dynamics are governed by difference equations. Formally, such systems can be described by a time-evolution map of the form $ t \mapsto t+h $, where~$ h>0 $ is called the \emph{discretization parameter} or the \emph{time step}. They have applications in biology~\cite{HeningSabharwal2025},  physics~\cite{Lee1983}, and naturally arise during the numerical analysis of differential equations. In the latter case, one faces a \emph{discretization problem}, \ie{}, finding a system of finite difference equations that converges to a given system of differential equations in the continuous limit $ h \to 0 $. Here, it is important to note that a discretization problem is not uniquely posed; moreover, different discretizations may have different properties while converging to the same ODE (or PDE) system in the continuous limit. In particular, a discretization of a continuous-time superintegrable system may be merely integrable (as for the caged oscillator model~\cite{Evans2008}, see~\cite{GubLat_sl2}), or nonintegrable at all (see~\eg{},~\cite{Petrera2011,Gubbiotti2021Nahm}). In this regard, the search for proper discretizations that retain the desired properties (such as integrability or superintegrability) remains an important research direction within the field; see, \eg{},~\cite{surisBook2003}.

Given the rich algebraic structure of superintegrable models, one might attempt to study the problem of superintegrable discretization on the algebraic level by examining the symmetry algebras of both the continuous model and its superintegrable discretization. Recently, in~\cite{DGL2026Calogero}, the symmetry algebras of the celebrated rational Calogero--Moser model~\cite{Calogero1971,Calogero1971erratum,wojciechowski1983} and its Nijhoff--Pang discretization~\cite{nijhoffTimediscretizedVersionCalogeroMoser1994,ujinoAdditionalConstantsMotion2008} were investigated. Specifically, it was shown that in this case discretization induces a deformation of the symmetry algebra of the continuous-time system with respect to the time step $h $, which plays the r\^ole of a deformation parameter. This result suggests an inverse problem, namely, whether the symmetry algebra formalism can be used to construct new superintegrable discretizations in an algorithmic way.

However, at the moment, the number of examples of this symmetry algebra deformation phenomenon is too limited to develop a systematic theory. In this regard, a natural candidate for further case study is the rational Ruijsenaars--Schneider model~\cite{RS1986}, which is commonly regarded as a relativistic generalization\footnote{Although not in the na\"ive sense of ``relativistic invariance'', see, \eg{}, the discussion in~\cite{Braden1997} on physical aspects of Ruijsenaars--Schneider models.} of the Calogero--Moser system,  at the same time sharing close similarities with the latter one on the level of algebraic structures~\cite{SurisWhy1996}.  The continuous-time model was proven to be not only integrable, but maximally superintegrable~\cite{AyadiFeher2010}. There also exists an integrable discretization of the Ruijsenaars--Schneider system~\cite{NijhoffRagnisco1996}; however, the superintegrability properties of this discretization, to the author's knowledge, have not yet been discussed in the literature. 
  
 In light of the above, this paper aims to extend the results of~\cite{ujinoAdditionalConstantsMotion2008,DGL2026Calogero} from the Calogero--Moser to the relativistic case. First, in~\cref{sec:bg}, we discuss the necessary preliminaries and known results on discrete-time systems, superintegrability, and the Ruijsenaars--Schneider model. Then, we present the following results: 
 \begin{itemize}
 	\item In~\cref{sec:SI-dRS}, we prove that the $N$-body discrete Ruijsenaars--Schneider system is superintegrable by constructing $N-1$ additional modified constants of motion of the corresponding system of difference equations. 
 	\item In~\cref{sec:Symalg-cont}, we obtain the complete structure of the symmetry algebra $\Alg^{(N)} $ associated with the continuous-time $N$-body Ruijsenaars--Schneider model. We illustrate our construction by presenting two- and three-body examples. 
 	\item In~\cref{sec:Symalg-discr}, we compute the symmetry algebra $\widetilde{\Alg}^{(N)}$ of the $N$-body discrete Ruijsenaars--Schneider system, and demonstrate that it is a deformation of the continuous-time symmetry algebra $\Alg^{(N)} $, with the time step $h$ serving as a deformation parameter, thereby extending an analogous result~\cite{DGL2026Calogero} from the Calogero--Moser model to the relativistic case. Similarly, we present examples for $N=2,3$.   
 \end{itemize}
 Finally, in~\cref{sec:concl}, we compare the obtained results with those for the Calogero--Moser case and discuss potential applications along with possible further developments.

\newpage 
\section{Background material and overview of existing results}
\label{sec:bg}

In this section, we review the fundamental notions from the theory of integrability and superintegrability for continuous-time~\cite{MPW,EscobarRuiz2026} and discrete-time dynamical systems~\cite{Bruschietal1991,Veselov1991,HietarintaJoshiNijhoff2016}, and recall the existing results on the rational Ruijsenaars--Schneider model~\cite{RS1986,arutyunovElementsClassicalQuantum2019} (hereinafter referred to as the \emph{RS model}), and its integrable discretization~\cite{NijhoffRagnisco1996}, see also~\cite[chapter 27]{surisBook2003}. 

We remark that this section is intended to establish notation and conventions, as well as to provide a brief overview of the context rather than an exhaustive discussion of the subjects. Therefore, for a comprehensive exposition of the topics, we invite the reader to consult the original references cited above. 

\subsection{Superintegrability of continuous- and discrete-time  dynamical systems}
\label{sec:SI-review}
\paragraph{Continuous-time case.}
We start by reviewing the notions for continuous-time dynamical systems in classical mechanics. Let us first recall the following:
\begin{definition}\label{def:SI}
Consider a Hamiltonian system with $N$ degrees of freedom, governed by a Hamiltonian function $H$. Suppose that there exists a set of $ k \geq N $ globally defined, functionally independent integrals of motion for the system. Then, we say that the system is:
\begin{itemize}
	\item \emph{(Liouville) integrable}, if $ k= N $ and if these $N$ integrals Poisson-commute; 
	\item \emph{superintegrable}, if it is integrable but $ k > N $. In particular: 
	\begin{itemize}
		\item \emph{minimally superintegrable}, if $ k= N+1 $, 
		\item \emph{maximally superintegrable} (hereinafter \emph{MS}), if $ k=2N-1 $. 
	\end{itemize}
\end{itemize}
\end{definition}
In the MS case, a remarkable property holds: every bounded trajectory is closed, and the motion is strictly periodic~\cite{nekhoroshev1972action}. Moreover, for such systems, one can, in principle, find a classical trajectory without calculus, by instead solving systems of algebraic equations (see~\cite{EscobarRuiz2026} for explicit examples of finding trajectories). 

From~\cref{def:SI}, it follows that for each $N$-degree-of-freedom MS system one can define a set:
\begin{equation}\label{eq:def:SNfi-intro}
	\genset^{(N)}_{\text{f.i.}} \coloneqq 
	\set{ F_1, \ldots, F_N , G_1, \ldots, G_{N-1} }
\end{equation}
of all $2N-1 $ functionally independent integrals of motion. Since by default we assume superintegrable systems to be integrable, there exists a subset of $N$ Poisson-commuting integrals of motion, such that $\{ F_i, F_j \} =0 $ for $i,j=1,\ldots,N $. The remaining Poisson brackets of the form $\{ F_i, G_m \}  $ and $ \{G_m, G_n \} $ (for $m,n=1,\ldots,N-1$) are, in general, nonzero. For a wide class of models,  one might express the right-hand side of these brackets as polynomials in integrals of motion. To this end, it may be necessary to extend the set~\eqref{eq:def:SNfi-intro} by adding $\ell \geq 0 $ functionally \emph{dependent} (yet linearly independent) additional integrals of motion~\cite{MPW}:
\begin{equation}\label{eq:def:SN-intro}
	\genset^{(N)} \coloneqq  
	\set{ F_1, \ldots, F_N , G_1, \ldots, G_{N-1}, I_1, \ldots, I_\ell }, 
\end{equation}  
Then, it might be possible to express the right-hand side of any nonzero relation $\{ F_i, G_m \}  $, $ \{G_m, G_n \} $, $\{ F_i, I_r \}  $, $ \{G_m, I_r \} $, $\{ I_r,  I_s \} $ for $i=1,\ldots,N $; $m,n =1,\ldots,N-1 $; $r,s =1,\ldots,\ell $ as a polynomial in the elements of~\eqref{eq:def:SN-intro}. The number $\ell $ depends on the model and on the number $N$: for example, it was rigorously shown in~\cite{EscobarRuiz2026} that for systems with $N=2$ degrees of freedom, it is always possible to obtain a symmetry algebra with $\ell =1 $. This is an upper bound: for some \mbox{$N=2$} models, $\ell =0 $ is sufficient as well; see, \eg{},~\cite[example II.1]{DGL2026Calogero}. There is also an analogous result for second-order systems with $N=3$ degrees of freedom; see~\mbox{\cite[theorem 14]{MPW}}.\footnote{Note that in this statement, the integrals of motion were assumed to be polynomials of degree two in momenta.}

Together with a generating set $\genset^{(N)} $, the Poisson structure relations define a \emph{symmetry algebra}~$\Alg^{(N)} $ associated with an $N$-degree-of-freedom MS system. This algebra encodes information about the dynamics and symmetries of the model under investigation; in particular, symmetry algebras are used for the classification of superintegrable systems~\cite{Kress2007}. While~$\Alg^{(N)} $ is typically a polynomial Poisson algebra, in simple cases -- such as the oscillator models or the Kepler problem -- it may not contain nonlinear terms, thereby becoming a Lie algebra; see, \eg{},~\cite{DGLOsc2025}.

For illustrative examples of explicit symmetry algebra computations in two degrees of freedom, we invite the reader to consult~\cite[section 3]{EscobarRuiz2026}. Let us also mention the works \cite{DGLOsc2025, DGL2026Calogero}, in which the symmetry algebras of the various $N$-dimensional oscillator models and the $N$-body Calogero--Moser system were computed, respectively.%

\paragraph{Discrete-time case.} 
Let us now consider  discrete time $ t \in h \Z $, where $ h>0$ is a fixed \emph{discretization parameter} or \emph{time step}, and a dynamical variable  $ \vec y (t) = 
( y_1(t), \ldots, y_d (t) ) \in \R^d $.  It is convenient to introduce the following compact notation:
\begin{equation}
	y_i \equiv y_i (t), \qquad
	\bar y_i \coloneqq y_i ( t+ h), \qquad
	\ubar{y}_i \coloneqq y_i (t-h), \qquad
	i=1,\ldots, d.  
\end{equation} 
Then, an expression of the form:
\begin{equation}\label{eq:1sr-ord-OdiffE}
	\vecbar{y} = \vec{f} ( \vec y) 
\end{equation}
is called a system of first-order ordinary (finite) \emph{difference equations (O$\Delta$Es)} for an unknown vector-valued sequence $\set{ \vec y (t) }_{t \in h \Z } \subset \R^d $. Fixing an initial condition $ \vec y ( t_0) \eqqcolon \vec y_0 $,~\cref{eq:1sr-ord-OdiffE} is equivalent~\cite{GJTV_class} to the iteration of the finite-translation map  $ \vec \Phi \colon \vec y \mapsto \vecbar{y}  $.      

In this setting, the discrete-time analog of a Hamiltonian system is the theory of \emph{symplectic maps} (or correspondences) of the form: 
\begin{equation}
\begin{aligned}
	\vec \Phi \colon D & \to D
	\\
	\vec y & \mapsto \vec \Phi ( \vec y) = \vecbar{y}, 
\end{aligned}	
\end{equation} 
where $D $ is an open subset of a symplectic manifold with symplectic form $\omega $. The map $\vec \Phi $ preserves a symplectic structure, \ie{}, $\vec \Phi^\ast  \omega = \omega  $. This condition provides an analog of continuous-time evolution through a Hamiltonian flow, because time evolution is required to be a canonical transformation. Therefore, we can define our main object of study as follows:
\begin{definition}
	Let $ \vec \Phi \colon D  \to D $ be a symplectic map, and fix an initial condition $ \vec y ( t_0) \eqqcolon \vec y_0 $. Then, we call an iteration $ \set{ \vec \Phi^n (\vec y_0) }_{n \in \Z} $ a \emph{discrete-time dynamical system}. 
\end{definition}
 
Similarly to the continuous setting, we say that a scalar function $J (\vec y) $ is an \emph{integral of motion} for the map $\vec \Phi $ if it remains constant under the action of $\vec \Phi $, \ie{}, \mbox{$ \bar{J} (\vec y ) \equiv J (\vecbar{y} ) = J (\vec y) $}. Then, in general, the theory of integrability for symplectic maps is developed in complete analogy with the continuous case, with one crucial difference: in a discrete setting, there is no direct analog of a Hamiltonian function. A canonical transformation $\vec \Phi $ has an associated generating function, but unlike the Hamiltonian, it is not, in general, an integral of motion. Consequently, discrete integrable systems appear to be much rarer and possess a more rigid structure compared to their continuous counterparts.

Let us now proceed to the superintegrability of symplectic maps. Using the definitions above, one can extend~\cref{def:SI} to discrete-time systems:  

\begin{definition}\label{def:discrete-SI}
Let $ \vec \Phi \colon D  \to D $ be a symplectic map. 
 Suppose that there exists a set of $ k \geq N $ globally defined, functionally independent integrals of motion for the map $\vec \Phi $. Then, we say that the map is:
\begin{itemize}
	\item \emph{(Liouville) integrable}, if $ k= N $ and if these integrals Poisson-commute; 
	\item \emph{superintegrable}, if it is integrable but $ k > N $. In particular: 
	\begin{itemize}
		\item \emph{minimally superintegrable}, if $ k= N+1 $, 
		\item \emph{maximally superintegrable} (\emph{MS}), if $ k=2N-1 $. 
	\end{itemize}
\end{itemize}
\end{definition}

To apply algebraic methods to study the dynamics of MS symplectic maps, we can also extend the notion of a symmetry algebra to the discrete-time case. The primary difference is that in the discrete case, the symmetry algebra no longer needs to possess a central element, which stems from the absence of the Hamiltonian function; see~\cite{DGL2026Calogero} and the related discussion above.

Finally, let us comment on the discretization problem. If, in a continuous limit $h \to 0 $, possibly up to a parameter rescaling, the system of difference equations~\eqref{eq:1sr-ord-OdiffE} converges to a system of first-order differential equations: 
\begin{equation}\label{eq:1st-ord-ODE}
	\vec{\dot Y} = \vec{ g } ( \vec Y ),
\end{equation}
we say that~\eqref{eq:1sr-ord-OdiffE} is a \emph{discretization} of~\eqref{eq:1st-ord-ODE}.  

Note that, in general, the discretization problem is ill-posed, in the sense that it does not admit a unique solution. Moreover, different discretizations can have different properties, both from analytical and numerical points of view (see, for instance, the discussion in~\cite{LeviMartinaWinternitz2015}). Consequently, finding discretizations that preserve specific desirable properties remains an important problem within the field, see \eg{},~\cite{surisBook2003}. 

In particular, we are interested in preserving \emph{superintegrability} properties when discretizing a given continuous system. Indeed, a discretization of a continuous superintegrable system may be merely integrable, or lose integrability entirely (see, \eg{}, the caged oscillator case discussed in~\cite{GubLat_sl2}).  Therefore, if both the system of ODEs~\eqref{eq:1st-ord-ODE} and the system of O$\Delta$Es~\eqref{eq:1sr-ord-OdiffE}  are (super)integrable, we say that the latter is a~\mbox{\emph{(super)integrable~discretization}}.

\subsection{Rational Ruijsenaars--Schneider model and its integrable discretization}
\label{sec:bg-RS-review}
\paragraph{Continuous-time case.} We consider a $2N$-dimensional symplectic manifold with Darboux coordinates $\vec q = ( q_1, \ldots, q_N) $ and $ \vec p = ( p_1, \ldots, p_N) $, satisfying the canonical Poisson brackets:
\begin{equation}
	\{ q_i, p_j \} = \delta_{i,j}, \qquad
	\{ q_i, q_j \} = \{ p_i, p_j \} =0, \qquad 
	1 \leq i,j \leq N.  
\end{equation} 
The rational RS model~\cite{RS1986} introduces the time-translation generator $\calH  $, the space-translation generator $ \calP $, and the boost generator $\calB $ with the following dynamical realization:
\begin{subequations}\label{eq:Poincare-generators-realization}
\begin{align}
	\calH & = m c^2 \sum_{k=1}^N \cosh (  \frac{p_k}{mc} ) 
	\prod_{ j \ne k } \sqrt{ 1+ \frac{\chi^2}{ m^2 c^2 \,  (q_k - q_j )^2}  } , 
	\label{eq:Poincare-generators-realization-H}
	\\ 
	\calP & = m c \sum_{k=1}^N \sinh ( \frac{p_k}{mc} ) 
	\prod_{ j \ne k } \sqrt{ 1+ \frac{\chi^2}{ m^2 c^2\, (q_k - q_j )^2}  }, 
	\label{eq:Poincare-generators-realization-P}
	\\
	\calB & = - m \sum_{k=1}^N q_k, 
\end{align}	
\end{subequations}
where $m$ is the particle mass, $c $ represents the speed of light, and we assume the parameter $\chi $ to be real. The generators~\eqref{eq:Poincare-generators-realization} obey the $1+1 $ Poincar\'e algebra:
\begin{equation}
	\{ \calH, \calP \} =0, \qquad
	\{ \calH, \calB \} = \calP, \qquad 
	\{ \calP, \calB \} = c^{-2} \calH. 
\end{equation}
Hereinafter, let us set $ m = c =1 $. Note that~\eqref{eq:Poincare-generators-realization} are not polynomial in momenta. In this regard, it is convenient to introduce the quantity: 
\begin{equation}
	b_k \coloneqq e^{p_k} \prod_{j \ne k } \sqrt{ 1+ \frac{\chi^2}{ ( q_k - q_j )^2 }   }, 
\end{equation}
such that:
\begin{equation}
	\{ q_i, q_j \} =0, \qquad
	\{ q_i, b_j \} = b_i \delta_{i,j}, \qquad
	\{ b_i, b_j \} =  \pi_{i,j} b_i b_j, 
	\qquad 
	1 \leq i,j \leq N, 
\end{equation}
where:
\begin{equation}
	\pi_{i,j} \coloneqq  
	\frac{1}{q_j - q_i + \imath \chi  }
	-\frac{1}{q_i - q_j + \imath \chi }
	+ (1- \delta_{i,j}) \frac{2 }{ q_i - q_j  }. 
\end{equation}

To discuss integrability, we define the following matrix $L$ by its matrix elements: 
\begin{equation}	\label{eq:Lax-matrix-elem-def} 
	L_{i,j} = \frac{ \imath \chi b_j }{ q_i - q_j + \imath \chi  }. 
\end{equation}
Observe that now we can rewrite the time-translation generator~\eqref{eq:Poincare-generators-realization-H} and the space-translation generator~\eqref{eq:Poincare-generators-realization-P} as follows:
\begin{equation}\label{eq:HandPviaHplusAndHminus}
	\calH  = \tfrac{1}{2} ( H_+ + H_- ), 
	\qquad
	\calP  = \tfrac{1}{2} ( H_+ - H_- ),
\end{equation} 
 where:  
\begin{equation}\label{eq:Hplus-Hminus-def}
	H_+ \coloneqq  \trace( L ), 
	\qquad
	H_- \coloneqq  \trace( L^{-1} ).   
\end{equation}

The equations of motion for the RS Hamiltonians~\eqref{eq:Hplus-Hminus-def} are equivalent to the following isospectral evolution equations:
\begin{subequations}\label{eq:Lax-eqs}
\begin{align}
\dv{L}{t} &= [ M^\pm, L ], 
\\ 	
\dv{X}{t} &  = [ M^\pm, X] + \varphi (L), 
\end{align}	
\end{subequations}
where $X \coloneqq \diag(q_1, \ldots, q_N) $, $ \varphi (L) \colon \gl_N \to \gl_N $ is a conjugation-covariant function~\cite{surisBook2003}, and the companion matrix is given, for example, as follows:  
\begin{equation}
	M_{i,j}^+  =  - \delta_{i,j} \left(  \frac{\imath}{ \chi } b_i  
	+ \sum_{k \ne i } \frac{ \imath \chi  }{ (q_i - q_k  ) ( q_i - q_k + \imath \chi  )  } b_k  
	\right) 
	+ (1 - \delta_{i,j} )  \frac{b_j}{ (q_i -q_j ) }. 
\end{equation}
The spectral invariants of the Lax matrix $L  $ provide a set of $N$ Poisson-commuting functionally independent integrals of motion $ \set{ F_1, \ldots, F_N } $, where:
\begin{equation}\label{eq:Fk-def}
	F_k \coloneqq \trace ( L^k ), \qquad k \in \Z.  
\end{equation}
Therefore, the rational RS system is integrable. Let us select the simplest member of the hierarchy as the Hamiltonian of the model~\cite{SurisWhy1996}:
\begin{equation}\label{eq:Hplus-choice}
	H_+ = F_1 = \sum_{k=1}^N b_k. 
\end{equation}
 Its flow is given by the integral curves of the following equations of motion: 
\begin{equation}\label{eq:eoms-F1} 
	\dot q_i = \{ q_i, F_1 \} = b_i, 
	\qquad
	\dot b_i = \{ b_i, F_1 \} =  \sum_{j \ne i }^N \pi_{i,j } b_i b_j.
\end{equation}
\begin{remark} 
Let us comment on the nonrelativistic limit of the RS system. 
Taking the limit of~\eqref{eq:Poincare-generators-realization} as $c \to \infty $, we obtain:
\begin{equation}
	\calH = N m c^2  +  H_{\CM} + \mathcal{O} \left( \frac{1}{c^2} \right ), 
	\qquad
	\calP =  P_{\CM} + \mathcal{O} \left( \frac{1}{c^2} \right),
\end{equation}
where $H_\CM  $ is the Hamiltonian of the Calogero--Moser model~\cite{Calogero1971,Calogero1971erratum}, and \mbox{$ P_\CM \coloneqq \sum_{k=1}^N p_k $}. Our choice~\eqref{eq:Hplus-choice}, in the physical units, reads as:
\begin{equation}
	H_+ = m c^2 \sum_{k=1}^N \exp \left( \frac{p_k}{mc}  \right)  \prod_{j \ne k } \sqrt{ 1+ \frac{\chi^2}{ m^2 c^2 \,  (q_k - q_j )^2 }   },
\end{equation}
and corresponds to the following nonrelativistic limit:
\begin{equation}
	H_+ = N m c^2  + c P_{\CM} + H_{\CM} + \mathcal{O} \left( \frac{1}{c } \right),
\end{equation}
which was expected, given~\cref{eq:HandPviaHplusAndHminus}.  
\end{remark}

The superintegrability of the rational RS model was established in~\cite{AyadiFeher2010}, see also~\cite{AFG2012}. Let us review the construction of the additional integrals of motion. First, using again the  diagonal matrix~\mbox{$ X $}, we define the auxiliary dynamical functions:
\begin{equation}
	J_k \coloneqq \trace( X L^k ), \qquad k \in \Z,  
\end{equation}
which obey the following structure relations:
\begin{equation}\label{eq:JFandJJ}
	\{ J_k, F_j \} = j F_{k+j}, 
	\qquad
	\{ J_k, J_j \} = (j-k) J_{k+j}. 
\end{equation}
Consequently, one can construct the quantities of the form: 
\begin{equation}\label{eq:Ks-new-def}
	K^{(a)}_{ m,n } \coloneqq J_m F_{a+n} - J_n F_{a+m}. 
\end{equation}
The quantities~\eqref{eq:Ks-new-def} constitute a relativistic generalization of the Wojciechowski-type constants of motion for the rational Calogero--Moser model~\cite{wojciechowski1983}. 
One can prove that each~$ K^{(a)}_{ m,n }  $ Poisson-commutes with a given $F_a $:
\begin{equation}
	\{ K^{(a)}_{ m,n }, F_a \} = \{ J_m F_{a+n}, F_a \} 
	- \{ J_n F_{a+m}, F_a\} 
	= F_{a+n} \{ J_m , F_a \} - F_{a+m} \{ J_n, F_a  \}
	=0.    
\end{equation}
Moreover, for a fixed $ a$, one can select a subset of $N-1 $ functionally independent integrals of motion among them, ensuring the superintegrability of the model whose dynamics are governed by $F_a $.  

We will focus on the model governed by $F_1 $; therefore, let us fix $a=1 $ in~\cref{eq:Ks-new-def}. In what follows, we will omit the upper index for $a=1 $:
\begin{equation}\label{eq:Kmn-cont}
	K_{m,n} \equiv K_{m,n}^{(1)} 
	= J_m F_{n+1} - J_n F_{m+1}. 
\end{equation}   
Note the antisymmetry $K_{m,n} = - K_{n,m} $. 
Hence, the integrals of motion of the form~\eqref{eq:Ks-new-def} for $F_1 $ can be organized into the following matrix:
\begin{equation}\label{eq:K-matrix}
\begin{bmatrix}
 0 & -K_{2,1} & \cdots & -K_{N, 1 }
 \\ 
 K_{2,1} & 0 & \cdots &  -K_{N , 2} 
 \\
 \vdots & & \ddots & \vdots 
 \\ 
 K_{N,1} & K_{N,2} & \cdots & 0 
\end{bmatrix}.
\end{equation}
The subset of $N-1 $ functionally independent integrals can be chosen, for example, by selecting the first column of~\eqref{eq:K-matrix}. That is, the set:    
\begin{equation}\label{eq:SN-fi}
	\genset^{(N)}_{\text{f.i.}} \coloneqq 
	\set{ F_1, \ldots, F_N, G_1, \ldots, G_{N-1}},
	\qquad
	G_m \coloneqq K_{m+1,1}  
\end{equation}
consists of a total of $2N-1 $ functionally independent constants of motion, ensuring the maximal superintegrability of the model governed by $F_1 = H_+ $.  

Other integrals from~\eqref{eq:K-matrix} are functionally dependent on those in the set $\genset^{(N)}_{\text{f.i.}} $: indeed, using the definitions~\eqref{eq:Fk-def} and~\eqref{eq:Ks-new-def}-\eqref{eq:Kmn-cont}, one can construct the functional relation:  
\begin{equation}\label{eq:funrel-cont}
	F_{j+1} K_{n,i} - F_{i+1} K_{n,j} + F_{n+1} K_{i,j} =0, 
	\qquad
	i,j,n \in \Z. 
\end{equation}
Fixing $ i=1 $ in~\cref{eq:funrel-cont} and using the antisymmetry of $K_{i,j} $, we obtain:
\begin{equation}\label{eq:funrel-cont-G}
	F_{j+1} G_{n-1} - F_{2} K_{n,j} - F_{n+1} G_{j-1} =0.  
\end{equation}
Here, \cref{eq:funrel-cont-G} relates $K_{n,j} $ for $2 \leq j < n \leq N -1  $, to the elements of the set $ \genset^{(N)}_{\text{f.i.}} $. Moreover, we can extend this argument to the case $ n= N $ because $F_{N+1} $ can be expressed uniquely in terms of $\set{ F_1, \ldots, F_N } $ using the Cayley--Hamilton theorem (we will discuss this in more detail below when computing the symmetry algebras).

Finally, from~\eqref{eq:K-matrix} and~\eqref{eq:funrel-cont-G}, it follows that for each fixed $N$ we need:
\begin{equation}\label{eq:funrel-counting}
	\frac{1}{2} N (N-1) - (N-1) = 
	\frac{1}{2} (N-1) (N-2) 
\end{equation} 
identities of the form~\eqref{eq:funrel-cont-G} to relate functionally dependent integrals from~\eqref{eq:K-matrix} to the elements of~$\genset^{(N)}_{\text{f.i.}} $. 
\paragraph{Discrete-time case.} 
Let us now proceed to the discretization of the rational RS system. 
The integrable discretization of the eqs.~\eqref{eq:eoms-F1} is given~\cite{NijhoffRagnisco1996,surisBook2003} by the symplectic map $\vec \Phi^+ \colon (q, b) \mapsto (\bar q, \bar b)   $, which, using the conventions of~\cref{sec:SI-review}, can be explicitly written in canonical form as follows:
\begin{subequations}\label{eq:Phi_plus_eoms}
\begin{align}
	b_k & = - \frac{\imath \chi }{h} 
	\frac{\prod_{i=1}^N ( q_k - \bar q_i) \prod_{i \ne k } (q_k - q_i + \imath \chi)  }{ \prod_{i=1}^N ( q_k - \bar q_i + \imath \chi  ) 
	\prod_{ i \ne k  } ( q_k - q_i)  }, 
	\\
	\bar b_k & = - \frac{\imath \chi }{h} 
	\frac{\prod_{i=1}^N ( \bar q_k -  q_i) 
	\prod_{i \ne k } (\bar q_k - \bar  q_i - \imath \chi)  }
	{ \prod_{i=1}^N ( \bar q_k -  q_i - \imath \chi  ) 
	\prod_{ i \ne k  } ( \bar q_k - \bar q_i)  }.  
\end{align}
\end{subequations}
Similarly, there exists a discretization corresponding to the flows of $H_- = F_{-1} $, but it is equivalent to $\vec \Phi^+ $ under the change $ \chi \mapsto - \chi  $ and the interchange $ \bar{q}_k \leftrightarrow \ubar{q}_k $, see~\cite[\S 27.1]{surisBook2003}. 

The discrete dynamical system~\eqref{eq:Phi_plus_eoms} admits a Lax representation. That is, one can discretize eqs.~\eqref{eq:Lax-eqs} as follows:
\begin{subequations}\label{eq:dLax-eqs}
\begin{align} 
	\bar{L} \M &= \M L,  \label{eq:dLax1}
	\\  
	\bar{X} \M &= \M X + 
	 h \M L \left( \1 - \imath h  \chi^{-1} L \right)^{-1},
	 \label{eq:dLax2} 
\end{align}
\end{subequations} 
where $L $ is given by the same matrix elements~\eqref{eq:Lax-matrix-elem-def}, and the modified companion matrix takes the form: 
\begin{equation}
	\M_{i,j} \coloneqq \sum_{i,j=1}^N \frac{h b_j}{ \bar q_i - q_j  }. 
\end{equation} 
It is easy to verify that the quantities $F_k $, defined by~\cref{eq:Fk-def}, are conserved under the difference equations of motion~\eqref{eq:Phi_plus_eoms}. Indeed, using the cyclicity of the trace, 
 one obtains, from~\eqref{eq:dLax1}:
\begin{equation}\label{eq:Fk-discr-conservation}
	\bar F_k = \trace \bigl[ \bar L^k \bigl] 
	= \trace \left[ \M L^k \M^{-1}  \right]  
	= \trace \bigl[ L^k  \bigl] = F_k.  
\end{equation}

Therefore, similarly to the continuous case, the set $\set{ F_1, \ldots, F_N } $ consists of $N$ Poisson-commuting functionally independent integrals of motion for the map~$\vec \Phi^+ $, ensuring the integrability of the discrete RS model.  

However, the additional integrals of motion~\eqref{eq:Ks-new-def} for the continuous system are no longer conserved in the discrete case. In this regard, we will derive their modified versions and discuss the superintegrability properties of the discrete model in the following section.   

\newpage 

\section{Superintegrability of discrete-time Ruijsenaars--Schneider model}
\label{sec:SI-dRS}
In~\cref{sec:bg}, we discussed the integrability and superintegrability of the continuous RS model, and the integrability of the discrete-time RS model. Therefore, to proceed to the computation of symmetry algebras, it only remains to show that the discretization is also superintegrable, like its continuous counterpart.   
 
As mentioned above, one can verify by direct computation that the additional integrals of motion $K_{m,n} $ for the continuous-time RS model, defined by~\cref{eq:Ks-new-def}, are no longer conserved in the discrete-time case. Nevertheless, the discretization~\eqref{eq:Phi_plus_eoms} is also superintegrable, and we can show this by direct construction of the modified additional Wojciechowski-type constants of motion. 

In fact, the construction of the modified invariants is very similar to the rational Calogero--Moser case: see how~\cite{ujinoAdditionalConstantsMotion2008} extends the result of~\cite{wojciechowski1983} to the discrete-time system. However, to the author's knowledge, the proof for the RS system is not yet written; therefore, in this section we fill this gap and provide this proof for consistency. This is the content of the following:
\begin{prop}\label{prop:discrete-additional-constants}
The modified quantities: 
\begin{equation}\label{eq:discrete-additional-constants-def}
\K_{m,n} \coloneqq \trace\left[ X \left( \1 - \imath h \chi^{-1} L  \right) L^m     \right] 
   \trace\left( L^{n+1} \right) 
   - \trace\left[ X \left( \1 - \imath h \chi^{-1} L  \right) L^n  \right]
  \trace\left( L^{m+1} \right), 
\end{equation} 
where $\1 $ is an $N \times N $ identity matrix, 
are conserved under the map $\vec  \Phi^+ \colon (q,b) \mapsto (\bar q, \bar b) $, given by the difference eqs.~\eqref{eq:Phi_plus_eoms}. Moreover, we have $\lim_{ h \to 0 } \K_{m,n} = K_{m,n} $ in the continuous limit. Additionally, the modified integrals of motion obey the following functional relation: 
\begin{equation}\label{eq:funrel-discr}
	F_{j+1} \K_{n,i} - F_{i+1} \K_{n,j} + F_{n+1} \K_{i,j} =0, 
	\qquad
	i,j,n \in \Z.  
\end{equation}
\end{prop}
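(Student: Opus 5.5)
The plan is to work directly from the discrete Lax pair~\eqref{eq:dLax-eqs} and find how the auxiliary quantities $\widetilde J_k \coloneqq \trace\left[ X \left( \1 - \imath h \chi^{-1} L \right) L^k \right]$ change under one step of $\vec\Phi^+$. The goal is the shift rule
\begin{equation*}
	\bar{\widetilde J}_k = \widetilde J_k + h F_{k+1},
\end{equation*}
which plays the role of the continuous relation $\dot J_k = \{J_k, F_1\} = F_{k+1}$ from~\eqref{eq:JFandJJ}.

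To derive it, I multiply~\eqref{eq:dLax2} on the right by $\M^{-1}$, which gives $\bar X = \M X \M^{-1} + h \M L(\1 - \imath h\chi^{-1}L)^{-1}\M^{-1}$. By~\eqref{eq:dLax1} we have $\bar L = \M L \M^{-1}$, and hence $\bar L^k(\1 - \imath h \chi^{-1}\bar L) = \M L^k (\1 - \imath h\chi^{-1}L)\M^{-1}$. Inserting both into $\trace[\bar X(\1-\imath h\chi^{-1}\bar L)\bar L^k]$ and using cyclicity of the trace, the first term returns $\widetilde J_k$. In the second term the factor $(\1-\imath h\chi^{-1}L)$ cancels its inverse, because all functions of $L$ commute, leaving $h\trace(L^{k+1}) = hF_{k+1}$. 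This cancellation is exactly why the weight $(\1 - \imath h\chi^{-1}L)$ appears in the definition.

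Conservation then follows as in the continuous case. Since $F_k$ is invariant by~\eqref{eq:Fk-discr-conservation},
\begin{equation*}
	\bar{\K}_{m,n} = (\widetilde J_m + hF_{m+1})F_{n+1} - (\widetilde J_n + hF_{n+1})F_{m+1} = \K_{m,n}.
\end{equation*}
For the continuum limit, note that $L$ and $X$ are the same functions of $(q,b)$ in both settings, so $\widetilde J_k = J_k - \imath h\chi^{-1}J_{k+1}$. This tends to $J_k$ as $h\to 0$, giving $\lim_{h\to0}\K_{m,n}=K_{m,n}$. The functional relation~\eqref{eq:funrel-discr} is purely algebraic. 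Writing $\K_{a,b} = \widetilde J_a F_{b+1} - \widetilde J_b F_{a+1}$ and expanding $F_{j+1}\K_{n,i} - F_{i+1}\K_{n,j} + F_{n+1}\K_{i,j}$, the six terms cancel pairwise: the $\widetilde J_n$, $\widetilde J_i$ and $\widetilde J_j$ terms each appear twice with opposite signs. This is the same cancellation as in~\eqref{eq:funrel-cont}.

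The main obstacle is justifying the manipulations with $\M$. One needs $\M$ invertible and $(\1-\imath h\chi^{-1}L)$ invertible, which holds generically on the domain of the map. One also needs~\eqref{eq:dLax2} to hold exactly, with the companion matrix $\M_{i,j} = h b_j/(\bar q_i - q_j)$ and the equations of motion~\eqref{eq:Phi_plus_eoms}. If the Lax pair~\eqref{eq:dLax-eqs} may be assumed, as the paper presents it, the proof is the short trace computation above. Otherwise I would verify~\eqref{eq:dLax2} entrywise. The off-diagonal entries reduce to a Cauchy-matrix identity. The diagonal entries require a partial-fraction (residue) identity in the variable $z$ applied to $\prod_i (z-\bar q_i)/\prod_i(z-q_i+\imath\chi)$-type rational functions, and this step is where the equations~\eqref{eq:Phi_plus_eoms} enter.
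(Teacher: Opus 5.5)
Your proposal is correct and is essentially the paper's proof. The paper likewise combines the two discrete Lax equations into $\bar X(\1 - \imath h\chi^{-1}\bar L) = \M\bigl[X(\1 - \imath h\chi^{-1}L) + hL\bigr]\M^{-1}$, which is your shift rule $\bar{\widetilde J}_k = \widetilde J_k + hF_{k+1}$; it then uses $\bar F_k = F_k$ so that the $h$-terms cancel, and it handles the continuum limit and the functional relation directly from the definitions, as you do (your remarks on invertibility and on checking~\eqref{eq:dLax-eqs} entrywise go beyond the paper, which simply takes the Lax pair as given).
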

\begin{proof} 
The continuous limit can be verified simply by evaluating~\cref{eq:discrete-additional-constants-def} at~\mbox{$h=0$}. Hence, the quantities $\K_{m,n} $ reduce to the integrals of motion $K_{m,n} $ of the continuous model, see~\cref{eq:Kmn-cont,eq:Ks-new-def}. The functional relations again follow from the definitions~\eqref{eq:discrete-additional-constants-def} and~\eqref{eq:Fk-def}.  

Let us now show that the modified quantities $\K_{m,n} $ are conserved under the system~\eqref{eq:Phi_plus_eoms}. The proof is performed by following the exact same steps as for the Calogero--Moser case.  Applying  the map $\vec \Phi^+ $ to~\eqref{eq:discrete-additional-constants-def}, we obtain: 
\begin{equation}\label{eq:proof:discrete-additional-constants-interm1}
	\bar{ \K}_{m,n} = \trace\left[ \bar X \left( \1 - \imath h \chi^{-1} \bar L  \right) \bar{L}^m   \right] 
   \trace\left( \bar{L}^{n+1} \right) 
   - \trace\left[ \bar X \left( \1 - \imath h \chi^{-1} \bar L  \right) \bar{L}^n  \right]
  \trace\left( \bar{L}^{m+1} \right).  
\end{equation}
Note that, combining~\cref{eq:dLax1,eq:dLax2}, we have: 
\begin{equation}\label{eq:proof:discrete-additional-constants-interm2}
\bar X \left( \1 - \imath h \chi^{-1} \bar{L} \right) 
= 
\M \left[ X  \left( \1 - \imath h \chi^{-1} L \right) + h L  \right] \M^{-1}.
\end{equation}
Substituting~\cref{eq:proof:discrete-additional-constants-interm2} into~\cref{eq:proof:discrete-additional-constants-interm1}, with the use of~\cref{eq:Fk-discr-conservation} and the trace properties, we arrive at: 
\begin{equation}
\begin{split}
\bar{ \K}_{m,n} &= 
\trace\left[  X  \left( \1 - \imath h \chi^{-1} L \right)    L^m
 + h L^{m+1}     \right] 
   \trace\left( {L}^{n+1} \right) 
   - \trace\left[   X  \left( \1 - \imath h \chi^{-1} L \right) L^n + h L^{n+1}     \right]
  \trace\left( {L}^{m+1} \right)  
  \\
  & = 
  \trace\left[  X  \left( \1 - \imath h \chi^{-1} L \right)    
  L^m       \right] 
   \trace\left( {L}^{n+1} \right) 
   - \trace\left[   X  \left( \1 - \imath h \chi^{-1} L \right) L^n     \right]
  \trace\left( {L}^{m+1} \right)   
  \\
  & = \K_{m,n}. 
\end{split}	
\end{equation}
\end{proof}
The modified integrals of motion provide superintegrability for the discrete RS model. That is:
\begin{prop}\label{prop:discrete-system-is-MS}
The discrete RS system $\vec \Phi^+ \colon (q, b) \mapsto ( \bar q, \bar b )  $ with the equations of motion~\eqref{eq:Phi_plus_eoms} is maximally superintegrable. 
\end{prop}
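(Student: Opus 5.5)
We exhibit the $2N-1$ integrals $\widetilde{\genset}^{(N)}_{\text{f.i.}} \coloneqq \set{F_1,\ldots,F_N,\widetilde G_1,\ldots,\widetilde G_{N-1}}$ with $\widetilde G_m \coloneqq \K_{m+1,1}$, mirroring~\eqref{eq:SN-fi}. By~\eqref{eq:Fk-discr-conservation} and~\cref{prop:discrete-additional-constants}, all of them are conserved under $\vec\Phi^+$. The $F_k$ Poisson-commute, being spectral invariants of $L$, and are functionally independent; hence the map is Liouville integrable. It remains to prove that the full set of $2N-1$ functions is functionally independent on an open dense subset of phase space. The integrals are globally defined rational functions of $(q,b)$, regular wherever $L$ is, so it suffices to show that the Jacobian of $(F_1,\ldots,F_N,\widetilde G_1,\ldots,\widetilde G_{N-1})$ with respect to $(q,b)$ has rank $2N-1$ at one point (or on one open set), for every fixed $h$.

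The first route is a deformation argument. The Jacobian matrix is polynomial (rational) in $h$, and at $h=0$ it reduces to the Jacobian of $\genset^{(N)}_{\text{f.i.}}$, which has rank $2N-1$ generically by the superintegrability of the continuous model~\cite{AyadiFeher2010}. Hence some $(2N-1)$-minor is a nonzero rational function of $(q,b,h)$. This already gives independence for all but finitely many $h$ at each point, but we want it for every $h>0$.

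For that we use the structure of $\K_{m,n}$ directly. Write $\widetilde J_k \coloneqq \trace[X(\1-\imath h\chi^{-1}L)L^k] = J_k - \imath h\chi^{-1}J_{k+1}$. Then $\widetilde G_m = \widetilde J_{m+1}F_2 - \widetilde J_1 F_{m+2}$, which has exactly the same form as $G_m$ with $J_k$ replaced by $\widetilde J_k$. We work in spectral (action-angle type) coordinates for the rational RS model: diagonalize $L = g\Lambda g^{-1}$ with eigenvalues $\lambda_a$. Then $J_k = \sum_a \theta_a \lambda_a^k$, where $\theta_a \coloneqq (g^{-1}Xg)_{aa}$ are the canonical conjugates of $\log\lambda_a$ (the Ruijsenaars duality); this is the mechanism behind~\eqref{eq:JFandJJ}. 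Consequently
\[
\widetilde J_k=\sum_a \theta_a \lambda_a^k\left(1-\imath h\chi^{-1}\lambda_a\right),
\]
so $\widetilde J_k$ is obtained from $J_k$ by the substitution $\theta_a\mapsto\tilde\theta_a \coloneqq \theta_a(1-\imath h\chi^{-1}\lambda_a)$. For fixed $\lambda$ this substitution is an invertible change of variables wherever $\lambda_a\neq -\imath\chi/h$, which is an open dense condition. The map $(\lambda,\theta)\mapsto(\lambda,\tilde\theta)$ is therefore a local diffeomorphism, and functional independence of $\{F_k,\widetilde G_m\}$ in $(\lambda,\theta)$ is equivalent to independence of $\{F_k, G_m\}$ in $(\lambda,\tilde\theta)$, which is the continuous result. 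Concretely, the $F_k$ depend only on $\lambda$ (Vandermonde, invertible for distinct eigenvalues), while $\partial \widetilde G_m/\partial\tilde\theta_a = \lambda_a^{m+1}F_2-\lambda_a F_{m+2}$. This $(N-1)\times N$ matrix has rank $N-1$ for generic distinct $\lambda$, as one checks through a Vandermonde-type determinant after factoring out $\lambda_a$. Together these give a block-triangular Jacobian of rank $N+(N-1)$.

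The main obstacle is justifying the spectral coordinates $(\lambda,\theta)$, namely the identity $J_k=\sum_a\theta_a\lambda_a^k$ together with the fact that $(\log\lambda,\theta)$ form a local coordinate system on an open dense subset of phase space. This is where one invokes the action-angle duality of the rational RS model~\cite{RS1986,AyadiFeher2010}. If that route proves too heavy, the fallback is the $h$-deformation argument above, supplemented by an explicit evaluation of a minor at a convenient configuration, such as widely separated $q$ where $L$ is nearly diagonal, to rule out exceptional values of $h$.
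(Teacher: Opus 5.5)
Your proposal is correct, and it takes a genuinely different route from the paper. The paper's proof only selects $\widetilde{\genset}^{(N)}_{\text{f.i.}}$ with $\G_m=\K_{m+1,1}$ and asserts that functional independence ``clearly follows from the continuous-time case.'' That assertion is essentially your first route: deforming the Jacobian away from $h=0$.

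As you note, that route by itself only gives independence for all but finitely many $h$. A $(2N-1)$-minor that is nonzero at $h=0$ could still vanish identically in $(q,b)$ at isolated values of $h$.

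Your second route closes this gap. Writing $\widetilde J_k=J_k-\imath h\chi^{-1}J_{k+1}=\sum_a\theta_a\lambda_a^k(1-\imath h\chi^{-1}\lambda_a)$ shows that all the $h$-dependence is absorbed into the rescaling $\theta_a\mapsto\tilde\theta_a$. The Jacobian is then block-triangular:
\begin{itemize}
\item a Vandermonde block for the $F_k$;
\item the block $\lambda_a(\lambda_a^mF_2-F_{m+2})$, which has rank $N-1$: factor out $\lambda_a$, adjoin the row $(1,\ldots,1)$, and the determinant becomes $F_2^{N-1}$ times a Vandermonde determinant.
\end{itemize}
This gives rank $2N-1$ for every $h$, and reproves the continuous statement at $h=0$ along the way. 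The paper's argument buys brevity. Yours buys uniformity in $h$ and explains why no exceptional time steps occur.

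The step you flag as the main obstacle is lighter than invoking the full action-angle duality; the paper's own relations \eqref{eq:JFandJJ} suffice.
\begin{itemize}
\item With $J_k=\sum_a\theta_a\lambda_a^k$ and $F_j=\sum_a\lambda_a^j$, imposing \eqref{eq:JFandJJ} for $k=0,\ldots,N-1$ and $j=1,\ldots,N$ and inverting Vandermonde matrices forces $\{\theta_a,\lambda_b\}=\delta_{a,b}\lambda_b$ and $\{\theta_a,\theta_b\}=0$.
\item $\{\lambda_a,\lambda_b\}=0$ because the $F_k$ commute.
\item The $2N\times 2N$ bracket matrix of $(\lambda,\theta)$ is therefore nondegenerate wherever the $\lambda_a$ are distinct and nonzero. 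Since the Poisson structure is nondegenerate, the differentials $d\lambda_a,d\theta_a$ are then linearly independent, so $(\lambda,\theta)$ are local coordinates there.
\end{itemize}

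Moreover, at points of the real phase space (distinct real $q$, $b_j>0$) you have $L=\diag(b)^{-1/2}\,\widehat L\,\diag(b)^{1/2}$. Here $\widehat L=\diag(b)^{1/2}C\,\diag(b)^{1/2}$ and $C_{ij}=\imath\chi/(q_i-q_j+\imath\chi)$ is a positive-definite Hermitian Cauchy-type matrix. Hence $\lambda_a>0$, so $1-\imath h\chi^{-1}\lambda_a\neq 0$ and $F_2\neq 0$ hold automatically there, not just generically.
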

\begin{proof}
Similarly to the continuous case, one can choose a set of $2N-1 $ integrals of motion as follows:
\begin{equation}\label{eq:discr-SN-fi}
	\widetilde \genset_{\text{f.i.}}^{(N)} \coloneqq 
	\Set{ F_1, \ldots, F_N, \G_1, \ldots, \G_{N-1} },
\end{equation}
where:
\begin{equation}
	\G_m \coloneqq \K_{m+1,1}
	=
	\trace\left[ X \left( \1 - \imath h \chi^{-1} L  \right) L^{m+1}     \right] 
   \trace\left( L^{2} \right) 
   - \trace\left[ X \left( \1 - \imath h \chi^{-1} L  \right) L  \right]
  \trace\left( L^{m+2} \right).   
\end{equation}
The functional independence of the elements of $\widetilde \genset_{\text{f.i.}}^{(N)} $ clearly follows from the continuous-time case. 
\end{proof}

\begin{remark}\label{rmk:RSvsCM-discrete-via-gen-flows}
Note that one can rewrite the discrete constants of motion~\eqref{eq:discrete-additional-constants-def} via the continuous flows~\eqref{eq:Ks-new-def} as follows:
\begin{equation}\label{eq:Kdiscr-via-flows}
	\K_{m,n} =  K_{m,n} - \imath h \chi^{-1} K_{m+1, n+1}^{(0)}. 
\end{equation}
Formally, the quantities  $  K_{i,j}^{(0)} $ correspond to the ``integrals of motion'' for the trivial flow generated by~\mbox{$ F_0 = N = \mathrm{const} $}. While for the RS model alone, this construction seems trivial, since any dynamical variable trivially Poisson-commutes with a constant, we remark that \cref{eq:Kdiscr-via-flows} reveals the same pattern as observed also in the Calogero--Moser case~\cite[eq. (III.12)]{DGL2026Calogero}. Namely,  for both models, the additional integrals of motion in the discrete setting take the form:   
\begin{equation}
	\K_{m,n}^{(a)} = K_{m,n}^{(a)} - \varkappa( h) K_{m+1,n+1}^{(a-1)}, 
\end{equation}
where $ \varkappa( h) $ is a function of the discretization parameter. For the Calogero--Moser model, we considered $a=2 $ since it corresponds to the flow of the standard quadratic (in $\vec p $) Hamiltonian function in Newtonian classical mechanics. This observation shows that in both cases, discretization adds the additional term, corresponding to a ``step backward'' in the hierarchy of~$K^{(a)}_{m,n} $. 
\end{remark}

\newpage

\section{Symmetry algebras and their deformations}
In this section, we compute the symmetry algebras of the RS models. First, in~\cref{sec:Symalg-cont}, we obtain the structure relations in the continuous-time case.
Then, in~\cref{sec:Symalg-discr}, we extend this construction to the discrete-time setting, using the additional discrete integrals of motion obtained in~\cref{sec:SI-dRS} above. 
By comparing the obtained symmetry algebras, we show that the symmetry algebra of the discrete model constitutes a deformation of the symmetry algebra in the continuous-time case, with the time step $h $ serving as a deformation parameter. We also illustrate our construction with two- and three-body examples.

\subsection{Symmetry algebra in continuous-time case}
\label{sec:Symalg-cont}
Let us start from the simplest possible case, \ie{}, the $N=2$-body continuous-time RS model. We compute the symmetry algebra as follows: 
\begin{example}[$N=2$]\label{ex:N=2-cont}
In our conventions, the evolution of the continuous-time dynamical system~\eqref{eq:eoms-F1} in the two-body case is governed by:
\begin{equation}\label{eq:ex:N=2-cont-Hplus-explicit-expression}
	H_+ = F_1 = b_1 + b_2. 
\end{equation}
The superintegrability of the model is provided by the set~\eqref{eq:discr-SN-fi} of a total of~$3$ integrals of motion:
\begin{equation}\label{eq:ex:N=2-cont-S2}
	\genset^{(2)} = \genset^{(2)}_{\text{f.i.}} = \set{ F_1, F_2, K_{2,1} }.
\end{equation}
Explicitly, these integrals can be rewritten as:
\begin{equation}\label{eq:ex:N=2-cont-explicit-expressions}
	F_2 = b_1^2 + b_2^2 + \frac{2 \chi^2 b_1 b_2 }{ (q_1 - q_2)^2 + \chi^2 }  , 
	\qquad
	K_{2,1} = \frac{1}{2} ( F_1^2 - F_2 ) 
	\bigl( F_1 
	 (  \vec{q} \vdot   \vec{b} ) - (q_1 + q_2) F_2 
	\bigr).  
\end{equation}
The computation of the Poisson brackets leads to the following structure relations: 
\begin{equation}\label{eq:ex:N=2-cont-structure-relations}
	\{ F_1, F_2 \} =0, \qquad 
	\{ F_1, K_{2,1} \} =0, \qquad 
	\{ F_2, K_{2,1} \} = \frac{1}{2} ( F_1^2 - F_2 )^2 ( F_1^2 - 2 F_2 ).
\end{equation} 
Therefore, we obtain that the symmetry algebra $ \Alg^{(2)}$ of the two-body continuous-time RS model is a polynomial Poisson algebra of degree six with $ H_+ = F_1  $ being a central element.  

Observe that to express the right-hand side of~\eqref{eq:ex:N=2-cont-structure-relations} purely in terms of the elements of~\eqref{eq:ex:N=2-cont-S2}, we do not need to introduce any additional functionally dependent integrals, \ie{}, we have~\mbox{$\ell =0 $} in~\eqref{eq:def:SN-intro} and the set of functionally independent integrals of motion $\genset^{(2)}_{ \text{f.i.} } $ coincides with the generating set $\genset^{(2)} $ of the symmetry algebra (see~\cref{eq:def:SN-intro}). Note that this is an exceptional feature of the two-body case: in what follows, we will show that this is no longer true for model with a higher number of bodies.    
\end{example} 
Let us now proceed to the arbitrary $ N$-body case. First, we compute the Poisson bracket between $F_k $ and $K_{m,n} $. This is the content of the following statement: 
\begin{prop}\label{prop:cont-formal-rels}
The integrals of motion of the continuous-time RS model obey the following relations:  
\begin{subequations}\label{eq:prop:cont-formal-rels}
\begin{align}
 \{ F_k, F_m \} & = 0, 
\label{eq:prop:cont-formal-rels-FF}
 \\ 
	\{ F_k, K_{m,n} \} &= k \left( F_{m+1} F_{k+n} - F_{n+1} F_{k+m} \right),\label{eq:prop:cont-formal-rels-FK}
	\\
	\begin{split}
		\{ K_{i,j}, K_{m,n} \} &= 
	 F_{i+1} \bigl( (m+1) K_{n, j+m} + (n+1) K_{j+n, m}  \bigr) 
	 + F_{j+1} \bigl( (m+1) K_{i+m,n} +(n+1) K_{m, i+n}  \bigr) \\ 
	 & \quad + F_{m+1} \bigl( (i+1) K_{i+n, j} + (j+1) K_{i, j+n}  \bigr)
	 + F_{n+1} \bigl( (i+1) K_{j, i+m} + (j+1) K_{j+m,i} \bigr). 
	 \label{eq:prop:cont-formal-rels-KK}
	\end{split}
\end{align}	
\end{subequations}
\end{prop}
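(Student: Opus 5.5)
The plan is to derive all three relations from the basic brackets \eqref{eq:JFandJJ}, namely $\{J_k,F_j\}=jF_{k+j}$ and $\{J_k,J_j\}=(j-k)J_{k+j}$, together with the Leibniz rule and the definition $K_{m,n}=J_mF_{n+1}-J_nF_{m+1}$ from \eqref{eq:Kmn-cont}. Relation \eqref{eq:prop:cont-formal-rels-FF} is simply the involutivity of the spectral invariants $F_k=\trace(L^k)$, already recalled in \cref{sec:bg-RS-review}. For \eqref{eq:prop:cont-formal-rels-FK}, I apply Leibniz to $\{F_k, J_mF_{n+1}-J_nF_{m+1}\}$. Since $\{F_k,F_\cdot\}=0$, only the $J$-factors contribute, and I use $\{F_k,J_m\}=-kF_{k+m}$. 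This immediately gives $-kF_{k+m}F_{n+1}+kF_{k+n}F_{m+1}$, which is the claimed right-hand side.

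For \eqref{eq:prop:cont-formal-rels-KK}, I expand $\{K_{i,j},K_{m,n}\}$ into four brackets of the form $\{J_aF_b,J_cF_d\}$. Each of these splits by Leibniz as
\[
\{J_aF_b,J_cF_d\}=F_bF_d\,(c-a)J_{a+c}+J_aF_d\{F_b,J_c\}+F_bJ_c\{J_a,F_d\},
\]
with $\{F_b,J_c\}=-bF_{b+c}$ and $\{J_a,F_d\}=dF_{a+d}$. The result is a sum of terms of two types: \emph{cubic} terms $J_xF_yF_z$ with a shifted $F$-index, and terms $F_yF_zJ_{x}$ in which the $J$-index is a sum of two original indices.

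The main step, and the expected obstacle, is regrouping these twelve-odd monomials into the form $\sum F_{\cdot}\,K_{\cdot,\cdot}$ with the specified coefficients $(m+1),(n+1),(i+1),(j+1)$. My approach is to organise the terms by the unshifted factor $F_{i+1},F_{j+1},F_{m+1},F_{n+1}$ that multiplies them. I then check, for instance, that the terms carrying $F_{i+1}$ combine to $(m+1)K_{n,j+m}+(n+1)K_{j+n,m}$. This requires matching each $J_{j+m}F_{n+1}$-type contribution, coming from the $\{J,J\}$ part, with the corresponding $J_nF_{j+m+1}$-type contribution, coming from the $\{J,F\}$ part.

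The coefficients work out as follows. A $\{J,J\}$ contribution gives a coefficient of the form $c-a$, while a $\{J,F\}$ contribution gives an index such as $d=m+1$. I expect the differences $c-a$ to split as $(m+1)-(j+1)$ and similar. The leftover pieces, such as $-(j+1)F_{i+1}F_{n+1}J_{j+m}$, should then pair with terms from the $F_{j+1}$ group.

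Wherever the direct pairing does not close, I will invoke the functional identity \eqref{eq:funrel-cont}, $F_{j+1}K_{n,i}-F_{i+1}K_{n,j}+F_{n+1}K_{i,j}=0$, to trade one $F\cdot K$ combination for another. I will also use the antisymmetry $K_{a,b}=-K_{b,a}$ to fix orientations. Finally, as a consistency check, I will specialise to $N=2$ with $(i,j,m,n)=(2,1,2,1)$, which must give zero, and compare with \eqref{eq:ex:N=2-cont-structure-relations} via \eqref{eq:prop:cont-formal-rels-FK} with $k=2$.
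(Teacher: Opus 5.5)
Your proposal is correct and follows essentially the same route as the paper, which packages your four Leibniz expansions into a single master bracket $\{R_{i,j},R_{k,l}\} = -(j+1)F_{l+1}R_{i,j+k} + F_{j+1}\bigl((k-i)R_{i+k,l}+(l+1)R_{k,i+l}\bigr)$ for $R_{a,b}=J_aF_{b+1}$, and then antisymmetrizes. Carried out, the regrouping closes exactly term by term, so the fallback to \eqref{eq:funrel-cont} is never needed; the only slip is in your illustrative aside: the $\{J,J\}$ coefficient of $F_{i+1}F_{n+1}J_{j+m}$ is $(j+1)-(m+1)$, and its $+(j+1)$ part is absorbed by $(j+1)F_{n+1}K_{j+m,i}$ (the $F_{n+1}$ group), not by the $F_{j+1}$ group.
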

Note that the relations~\eqref{eq:prop:cont-formal-rels} alone do not lead to a finitely-generated algebraic structure. Indeed, the right-hand side of~\cref{eq:prop:cont-formal-rels-FK,eq:prop:cont-formal-rels-KK} returns Virasoro-like relations, where the resulting index is increased as a sum of the indices from the left-hand side. Since our goal here is to obtain a finitely-generated polynomial Poisson algebra, as discussed in the introductory~\cref{sec:SI-review}, we will need to perform additional algebraic constructions, provided later in this section. For now, let us prove the statement: 
\begin{proof}[\cref{prop:cont-formal-rels}] 
The relation~\eqref{eq:prop:cont-formal-rels-FK} can be derived simply by using the definition~\eqref{eq:Kmn-cont} together with the properties of the Poisson bracket (bilinearity and Leibniz rule). To derive~\cref{eq:prop:cont-formal-rels-KK}, in turn, one can introduce an object:
\begin{equation}
	R_{m,n} \coloneqq J_m F_{n+1}, 
\end{equation} 
such that the integrals $K_{m,n} $ are given by its antisymmetrization with respect to the indices: 
\begin{equation}
	K_{m,n} = R_{m,n} - R_{n,m} \eqqcolon R_{[n,m]}. 
\end{equation}
Evaluating the following auxiliary bracket:
\begin{equation}
	\{ R_{i,j}, R_{k,l} \} 
	= \{ J_i F_{j+1}, \, J_k F_{l+1} \} = 
	- (j+1) F_{l+1} R_{i, j+k}
	+ F_{j+1} \bigl( (k-i) R_{i+k, l} + (l+1) R_{k, i+l} \bigr), 
\end{equation}
and performing an antisymmetrization, we arrive at~\cref{eq:prop:cont-formal-rels-KK}. For more details, see the proof of an analogous statement for the Calogero--Moser model,~\cite[proposition II.3]{DGL2026Calogero}. 
\end{proof}
Now, let us apply~\cref{prop:cont-formal-rels} to the symmetry algebra computation.
First of all, we consider the relations between the elements of the set $\genset^{(N)}_{\text{f.i.}} $~\eqref{eq:SN-fi}. It follows from the trace properties and the Cayley--Hamilton theorem that to arrive at the finitely-generated Poisson algebra of integrals of motion, one can extend the set $\genset^{(N)}_{\text{f.i.}} $ by adding all the remaining integrals from~\eqref{eq:K-matrix}, \ie{}, the resulting set consists of all $F_k $ and $K_{m,n} $ such that $1 \leq k \leq N $ and $1 \leq n < m \leq N $. 
However, this is also not sufficient: even when applied to the elements of this extended set, the output of~\eqref{eq:prop:cont-formal-rels} contains the integrals of motion whose indices exceed~$N$. While these quantities are well-defined by~\cref{eq:Fk-def,eq:Kmn-cont} for any integer indices, we must express them purely in terms of the elements of the generating set. Fortunately, similarly to the Calogero--Moser case, this can be done using the Cayley--Hamilton theorem. This is the content of the following statement:          
\begin{prop}\label{prop:cont-closure-rels}
Consider the extended set of integrals of motion for the RS model, given by:
\begin{equation}\label{eq:gensetN-cont-extended}
	\genset^{(N)} \coloneqq \set{ F_n  }_{n=1}^N 
	\cup 
	\set{ K_{n,m} }_{1 \leq m < n \leq N },  
\end{equation}	
such that all functionally independent integrals~\eqref{eq:SN-fi} form a subset \emph{$ \genset^{(N)}_{\text{f.i.}} \subset \genset^{(N)}  $}. Then, the Poisson bracket between any pair of elements of $\genset^{(N)} $, given by the formulas~\eqref{eq:prop:cont-formal-rels}, can be expressed again purely in terms of elements of $\genset^{(N)} $, using the antisymmetry condition $K_{m,n} = - K_{n,m} $, and the following recursive formulas for $s>0 $:%
\begin{subequations}\label{eq:prop:cont-closure-rels}
\begin{align}
	F_{N+s} &= - \sum_{k=1}^N \frac{1}{k!} 
	\Bell_k \bigl( -0! F_1, \, -1! F_2, \, -2! F_3, \ldots, -(k-1)! F_k \bigr) F_{N+s-k},  
	\label{eq:prop:cont-closure-rels-F}
	\\
	K_{N+s, \ell } & = - \sum_{k=1}^N \frac{1}{k!} 
	\Bell_k \bigl( -0! F_1, \, -1! F_2, \, -2! F_3, \ldots, -(k-1)! F_k \bigr) K_{N+s-k, \ell }. 
	\label{eq:prop:cont-closure-rels-K}
\end{align}	
\end{subequations} 
where $\Bell_k ( \hyphen )   $ is the $k$-\emph{th} complete exponential Bell polynomial.  
\end{prop}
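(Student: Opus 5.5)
The argument rests on the Cayley--Hamilton theorem for the Lax matrix $L$, with Newton's identities converting elementary symmetric functions of the eigenvalues into power sums. Write the characteristic polynomial as $\det(\lambda \1 - L) = \sum_{k=0}^N (-1)^k e_k \lambda^{N-k}$, where the $e_k$ are the elementary symmetric polynomials in the eigenvalues of $L$. Cayley--Hamilton gives $L^N = -\sum_{k=1}^N (-1)^k e_k L^{N-k}$. Multiplying on the right by $L^{s}$ for $s\geq 0$ yields
\begin{equation*}
L^{N+s} = \sum_{k=1}^N (-1)^{k+1} e_k \, L^{N+s-k}.
\end{equation*}
This matrix identity is linear in the powers of $L$, and its coefficients are scalar conjugation invariants.

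Taking the trace of this identity gives a recursion for $F_{N+s}$ in terms of $F_{N+s-k}$, $k=1,\ldots,N$. Multiplying on the left by $X$ and then taking the trace gives the same recursion for $J_{N+s}$, with the same scalar coefficients. Since $K_{N+s,\ell} = J_{N+s}F_{\ell+1} - J_\ell F_{N+s+1}$, I will check linearity in the first index. The first term $J_{N+s}F_{\ell+1}$ obeys the recursion because $J$ does. The second term obeys it because $F_{N+s+1}$ satisfies the same recursion, each $F_{N+s+1-k}$ being the $(N+s-k)+1$ shift. Adding the two, $K_{m,\ell}$ satisfies, as a function of $m \geq N+1$, the same linear recursion with the same coefficients. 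This proves \eqref{eq:prop:cont-closure-rels-K} once \eqref{eq:prop:cont-closure-rels-F} is established. If $N+s+1-k$ happens to be small, the recursion still holds, since it is just the trace of the matrix identity applied with shift $s+1$.

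Next I identify the coefficients. The claim is that $(-1)^{k+1}e_k = -\frac{1}{k!}\Bell_k(-0!F_1,-1!F_2,\ldots,-(k-1)!F_k)$, that is, $(-1)^k e_k = \frac{1}{k!}\Bell_k(\ldots)$. This follows from the generating function
\begin{equation*}
\sum_{k\ge 0}(-1)^k e_k t^k = \det(\1 - tL) = \exp\Bigl(-\sum_{j\ge1}\frac{F_j}{j}t^j\Bigr),
\end{equation*}
together with the defining property $\exp\bigl(\sum_j x_j t^j/j!\bigr) = \sum_k \Bell_k(x_1,\ldots,x_k)t^k/k!$. Taking $x_j = -(j-1)!F_j$ gives $x_j/j! = -F_j/j$, which matches. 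The coefficients only involve $F_1,\ldots,F_N$, because $k\le N$. Repeated application of the recursion then reduces any $F_{N+s}$ or $K_{N+s,\ell}$ to indices at most $N$.

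Finally, I need to confirm closure. The right-hand sides of \cref{prop:cont-formal-rels} for $1\le k,m,n,i,j\le N$ involve $F_r$ with $r\le 2N$ and $K_{p,q}$ with indices up to $2N$, always positive and at least $1$. Applying the recursion to the larger index (using antisymmetry to move it into the first slot, and reducing both indices one after another), every term becomes a polynomial in the $F_k$ times a combination of $K_{p,q}$ with $1\le p,q\le N$. Terms with $p=q$ vanish, and terms with $p<q$ are converted by antisymmetry into elements of $\genset^{(N)}$.

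The main obstacle is this bookkeeping: confirming that no index $0$ or negative index appears. Here $F_0 = N$ would merely be a constant, and negative powers would require $L^{-1}$. Since all indices in \eqref{eq:prop:cont-formal-rels} are sums of positive indices plus $1$, they are at least $1$, so this problem does not arise.
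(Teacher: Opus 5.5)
Your proposal is correct and follows essentially the same route as the paper. Cayley--Hamilton for $L$ gives the trace recursion for $F_{N+s}$, the same identity multiplied on the left by $X$ gives it for $J_{N+s}=\trace(XL^{N+s})$, and the definition $K_{m,n}=J_mF_{n+1}-J_nF_{m+1}$ transfers it to $K_{N+s,\ell}$. Two things are extra relative to the paper, which cites an appendix of the Calogero--Moser paper for the coefficients and leaves the closure step implicit: you derive the Bell-polynomial coefficients yourself from $\det(1-tL)=\exp\bigl(-\sum_{j\geq 1}F_j t^j/j\bigr)$, and you spell out the index bookkeeping for closure.
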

\begin{proof}
Let us denote by $A $ a generic $N \times N $ matrix with complex entries, and $a_k \coloneqq \trace(A^k) $. Then, it follows from the Cayley--Hamilton theorem that any $a_{N+s}$ with $s>0 $ can be expressed in terms of $ \set{a_1, \ldots, a_{N+s-1} } $ using the formula:
\begin{equation}\label{eq:Bell-fla-generic}
	a_{N+s} = - \sum_{k=1}^N \frac{1}{k!} 
	\Bell_k \bigl( -0! a_1, \, -1! a_2, \, -2! a_3, \ldots, -(k-1)! a_k \bigr) a_{N+s-k},
\end{equation}
see, \eg{}, \cite[appendix 1]{DGL2026Calogero} for the derivation. Applying this formula recursively starting from $s=1 $, one expresses any $a_{N+s} $ in terms of $ \set{a_1, \ldots, a_N} $. Hence, taking $A= L $ (the Lax matrix of the RS model), one readily obtains~\cref{eq:prop:cont-closure-rels-F}.  

Using the Cayley--Hamilton theorem again, one arrives at a similar expression for $J_{N+s} $ in terms of $ \set{ F_1, \ldots, F_N, J_1, \ldots, J_N } $. Indeed: 
\begin{equation}\label{eq:proof:cont-closure-rels-JN+s}
	J_{N+s} = \trace( X L^{N+s} )
	= - \sum_{k=1}^N \frac{1}{k!} 
	\Bell_k \bigl( -0! F_1, \, -1! F_2, \, -2! F_3, \ldots, -(k-1)! F_k \bigr) J_{N+s-k}. 
\end{equation}
Therefore, from~\eqref{eq:Kmn-cont}, \eqref{eq:prop:cont-closure-rels-F}, and~\eqref{eq:proof:cont-closure-rels-JN+s}, we obtain~\eqref{eq:prop:cont-closure-rels-K} in terms of the elements of~\eqref{eq:gensetN-cont-extended}. 
Note that $K_{\ell, N+s} $ can always be obtained from $K_{N+s, \ell} $ using the antisymmetry condition.    
\end{proof}

Therefore, using the findings above, we conclude with an explicit description of the symmetry algebra in the continuous-time case: 
\begin{prop}\label{prop:Symalg-N-cont}
The symmetry algebra $\Alg^{(N)} $ of the $N$-body RS model is a polynomial Poisson algebra, generated by the elements of the set $\genset^{(N)} $, defined by~\cref{eq:gensetN-cont-extended}. Its Poisson structure relations are given in eqs.~\eqref{eq:prop:cont-formal-rels}, provided the additional closure relations in~\cref{prop:cont-closure-rels}. Moreover, the set $ \braket{F_i}_{i=1}^N $ is an ideal of $\Alg^{(N)} $. Finally, the degree of the polynomial algebra $\Alg^{(N)} $ is $3N $.  	
\end{prop}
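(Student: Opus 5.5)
The plan is to assemble \cref{prop:Symalg-N-cont} from \cref{prop:cont-formal-rels,prop:cont-closure-rels}, and then to check the two additional claims (the ideal property and the degree) using a grading argument.

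\textbf{Closure.} By \cref{prop:cont-formal-rels}, the bracket of any two elements of $\genset^{(N)}$ is a finite sum of products of $F_k$'s and $K_{m,n}$'s, possibly with indices exceeding $N$. Applying \cref{prop:cont-closure-rels} recursively in $s$, every such $F_{N+s}$ and $K_{N+s,\ell}$ becomes a polynomial in $\set{F_1,\ldots,F_N}$ and $\set{K_{m,\ell}}$ with $m\le N$. The second index is then reduced by antisymmetry followed by the same recursion. Terms with coinciding indices $K_{n,n}=0$ are discarded, and terms with $m<n$ are flipped. Hence the bracket lands in the polynomial algebra generated by $\genset^{(N)}$. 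Functional independence of $\genset^{(N)}_{\text{f.i.}}\subset\genset^{(N)}$ is inherited from \cite{AyadiFeher2010}, so $\Alg^{(N)}$ is indeed a polynomial Poisson algebra of integrals.

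\textbf{Ideal.} I read ``$\braket{F_i}$ is an ideal'' as a Poisson ideal: the associative ideal generated by the $F_i$ is stable under brackets with arbitrary elements. By the Leibniz rule it suffices to check $\{F_k,X\}\in\braket{F_i}$ for $X\in\genset^{(N)}$. This follows from \eqref{eq:prop:cont-formal-rels-FF}, and from \eqref{eq:prop:cont-formal-rels-FK}, whose right-hand side is quadratic in $F$'s. The Bell reduction \eqref{eq:prop:cont-closure-rels-F} keeps every $F_{N+s}$ inside $\braket{F_i}$, because each term carries a factor $F_{N+s-k}$ and the Bell polynomials have no constant term for $k\ge1$. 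I would also remark that the $\{K,K\}$ brackets lie in $\braket{F_i}$, since every term in \eqref{eq:prop:cont-formal-rels-KK} carries an explicit $F$-factor and \eqref{eq:prop:cont-closure-rels-K} multiplies lower $K$'s by Bell coefficients, which lie in $\braket{F_i}$.

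\textbf{Degree.} Introduce the weight grading by homogeneity in $b$. Since $L$ is linear in $b$ and $X$ is $b$-free, $F_k$ has weight $k$ and $K_{m,n}$ has weight $m+n+1$. The brackets $\{q,b\}\propto b$ and $\{b,b\}\propto bb$ make the Poisson bracket additive in weight, which is visible directly in \eqref{eq:prop:cont-formal-rels}. The Bell recursion is weight-homogeneous, since $B_k(\ldots)$ has weight $k$. For $\{F_k,K_{m,n}\}$ the right-hand side has weight $k+m+n+1\le N+N+(N-1)+1=3N$. Its polynomial degree in the generators is at most the weight, with equality attained by monomials in $F_1$. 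The term $F_{N+1}F_{2N-1}$, from $k=N$, $m=N$, $n=N-1$, generically produces $F_1^{3N}$ with nonzero coefficient, by the power-sum--elementary-symmetric expansion; this gives degree exactly $3N$, consistent with $N=2$ in \cref{ex:N=2-cont}.

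\textbf{Main obstacle.} The hardest step is showing that the $\{K,K\}$ brackets, although of weight up to $4N-1$, do not exceed polynomial degree $3N$. I would argue term by term. Each term of \eqref{eq:prop:cont-formal-rels-KK} is $F_{a}$ (with $a\le N+1$) times a single $K$ of weight $w$. After Bell reduction this becomes a sum of products $(\text{polynomial in }F)\cdot K_{m',n'}$, where the $F$-polynomial has weight $w-(m'+n'+1)\le w-4$, and where we may assume $m'+n'\ge 3$. The polynomial degree is therefore at most $a$ (reduced), plus $w-4$, plus $1$. Bounding this by $3N$ using $i,j,m,n\le N$ needs a careful accounting of the index ranges: for example, $F_{i+1}K_{n,j+m}$ with $i+1\le N+1$, $j+m\le 2N-1$, $n\le N$. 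I would first try to verify that bound explicitly for the extremal index choices, and fall back on checking $N=3$ directly if the general accounting is not tight.
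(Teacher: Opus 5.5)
Your closure and ideal arguments match the paper's. The $b$-weight grading is a good device: it settles the $\{F_k,K_{m,n}\}$ brackets cleanly, giving the upper bound $3N$, and the lower bound via the $F_1^{3N}$ coefficient of $F_{N+1}F_{2N-1}$. That coefficient is indeed nonzero, though you should verify it rather than call it generic. The companion term $F_NF_{2N}$ cannot cancel it, since every one of its monomials contains the generator $F_N$.

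The gap is the upper bound for $\{K_{i,j},K_{m,n}\}$. You leave it open, and the estimate you sketch cannot close it. Bounding the degree of the $F$-coefficient by its weight $w-(m'+n'+1)\le w-4$ loses too much. Take the term $F_{N+1}K_{2N-1,N-2}$ of $\{K_{N,N-2},K_{N,N-1}\}$, which has $w=3N-2$. Your estimate gives $(N+1)+(3N-6)+1=4N-4$, which exceeds $3N$ for $N\ge 5$. Checking $N=3$ by hand says nothing about general $N$. Pure weight counting cannot succeed here, because these brackets have total weight up to $4N-1$.

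The missing observation, which is exactly what the paper uses, is that the recursion \eqref{eq:prop:cont-closure-rels-K} acts only on the first index and leaves the second index $\ell$ fixed. In your grading, $K_{N+s,\ell}$ (weight $N+s+\ell+1$) reduces to $\sum_{m'}P_{m'}(F)\,K_{m',\ell}$ with $P_{m'}$ of weight $N+s-m'$. Hence $\deg K_{N+s,\ell}\le N+s$, just as $\deg F_{N+s}=N+s$. The surviving generator is $K_{m',\ell}$ with $\ell$ fixed, not an arbitrary $K_{m',n'}$ with $m'+n'\ge 3$.

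Now read off \eqref{eq:prop:cont-formal-rels-KK} with $j<i\le N$ and $n<m\le N$:
\begin{itemize}
\item Every $K$ on the right has one index at most $N$ and the other at most $2N$.
\item The only prefactors that can equal $F_{N+1}$ are $F_{i+1}$ and $F_{m+1}$. They multiply $K_{n,j+m}$, $K_{j+n,m}$, $K_{i+n,j}$ and $K_{i,j+n}$, whose large index is at most $2N-1$.
\item The prefactors $F_{j+1}$ and $F_{n+1}$ are generators of degree one, multiplying $K$'s with large index at most $2N$.
\end{itemize}
Every term therefore has degree at most $\max(3N,2N+1)=3N$. Since the $\{F,K\}$ brackets already attain $3N$, this upper bound is all you need to conclude $\deg\Alg^{(N)}=3N$.
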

\begin{proof}
All the assertions of the~\namecref{prop:Symalg-N-cont} follow from the statements above except for the total degree and the ideal. The latter is clear from the right-hand side of~\cref{eq:prop:cont-formal-rels-FK}, which can be expressed only in terms of $\set{F_1, \ldots, F_N} $ using formula~\eqref{eq:prop:cont-closure-rels-F}.  
 Therefore, we are only left to prove that~\mbox{$\deg \Alg^{(N)} =3N $}.

For the two-body case, we have $\deg \Alg^{(2)} =6$ (see~\cref{ex:N=2-cont}), hence the statement holds true for $N=2$. To proceed, let us assume $N>2 $. We have: 
\begin{equation}\label{proof:Symalg-N-cont:degAlg-1}
    \deg\Alg^{(N)}=
    \max\Set{\max_{\substack{i=1,\ldots,N\\1\leq n<m\leq N}}\deg_{\genset^{(N)}}
    \{F_i, K_{m,n}\},
    \max_{\substack{1\leq j<i\leq N\\1\leq n<m\leq N}}\deg_{\genset^{(N)}}
    \{ K_{i,j} ,  K_{m,n}\} }.
\end{equation}
Clearly, $ \deg_{\genset^{(N)}} (F_k) =1 $ for $k=1,\ldots,N $  
and $  \deg_{\genset^{(N)}} ( K_{i,j} ) =1  $ for $1 \leq i,j \leq N $. Then, applying the formulas~\eqref{eq:prop:cont-closure-rels} recursively, and taking into account that $ \deg_{ \vec y }  \Bell_k ( \vec y) = k  $, we obtain for~$s>0 $:%
\begin{subequations}\label{proof:Symalg-N-cont:degsFandK}
\begin{align}
	 \deg_{\genset^{(N)}} ( F_{N+s} ) &= N+s,
	 \\ 
	  \deg_{\genset^{(N)}} ( K_{N+s, \ell } ) & = N+s. 
\end{align}	
\end{subequations}
Hence, let us estimate the maximal degree of the relations in~\cref{proof:Symalg-N-cont:degAlg-1}: 
\begin{subequations}
\begin{align}
\max_{\substack{i=1,\ldots,N\\1\leq n<m\leq N}}\deg_{\genset^{(N)}}
    \{  F_i ,  K_{m,n} \}
    &=
    \deg_{\genset^{(N)}} \{F_{N} ,  K_{N,N-1}\}
    =
    \deg_{\genset^{(N)}} ( F_{N+1} F_{2N-1} )  = 3N, 
\\ 
 \max_{\substack{1\leq j<i\leq N\\1\leq n<m\leq N}}\deg_{\genset^{(N)}}\pb*{K_{i,j}}{K_{m,n}}
    &=
    \deg_{\genset^{(N)}}\pb{K_{N,N-2}}{K_{N,N-1}} = \deg_{\genset^{(N)}} ( F_{N+1} K_{2N-1, N-2}) = 3N. 
\end{align}	
\end{subequations}
Therefore, we have $ \deg_{\genset^{(N)}} \Alg^{(N)} = 3N $ for $N>2$ as well.  
\end{proof}

Let us now illustrate~\cref{prop:Symalg-N-cont} by applying it to compute the symmetry algebra of the three-body model. 
\begin{example}[$N=3$]\label{ex:N=3-cont}
For the symmetry algebra $\Alg^{(3)} $ of the three-body RS model, we may rewrite the complete generating set~\eqref{eq:gensetN-cont-extended} as follows:
\begin{equation}\label{ex:N=3-cont:genset}
	\genset^{(3)} = \set{ F_1, F_2, F_3, K_{2,1}, K_{3,1}, K_{3,2} },
\end{equation} 
of which the subset of functionally independent integrals of motion $\genset^{(3)}_{\text{f.i.}} = \set{ F_1, F_2, F_3, K_{2,1}, K_{3,1} } $ consists of $2N-1= 5 $ elements, ensuring the maximal superintegrability of the model. Explicitly:%
\begin{subequations}\label{eq:proof:Symalg-N-cont:expl-Fs}
\begin{align}
  F_1 & = b_1 + b_2 + b_3, & 
  \\ 
  F_2 & = b_1^2 + b_2^2 + b_3^2 + 2 \chi^2 \left(
  \frac{b_1 b_2}{ (q_1 - q_2)^2 + \chi^2 }
  + \frac{b_1 b_3}{ (q_1 - q_3)^2 + \chi^2 } 
  + \frac{b_2 b_3}{ (q_2 - q_3)^2 + \chi^2 } 
  \right), & 
  \\ 
  \begin{split}
  	 F_3 & = b_1^3  + b_2^3 + b_3^3 + 
  	 3 \chi^2 \left(
  	    \frac{b_1 b_2 (b_1 + b_2 )}{ (q_1 - q_2)^2 + \chi^2 }
  	  + \frac{b_1 b_3 (b_1 + b_3 )}{ (q_1 - q_3)^2 + \chi^2 }
  	  + \frac{b_2 b_3 (b_2 + b_3 )}{ (q_2 - q_3)^2 + \chi^2 } 
  	 \right)
  	 \\
  	 & \quad\quad\quad\quad     
+ 3\chi^4 b_1 b_2 b_3 \frac{(q_1 - q_2)^2 + (q_2 - q_3)^2 + (q_1 - q_3)^2 + 2\chi^2}{ \left((q_1 - q_2)^2 + \chi^2\right )
 \left((q_2 - q_3)^2 + \chi^2 \right)
 \left((q_1 - q_3)^2 + \chi^2 \right)}. 
  \end{split}
\end{align}	
\end{subequations}
and:
\begin{subequations}\label{eq:proof:Symalg-N-cont:expl-Ks}
\begin{align}
  K_{2,1} &= ( \vec q \vdot \vec A) F_2 - ( \vec q \vdot \vec b) F_3, 
  \\ 
  \begin{split}
  	 K_{3,1} &  = F_2 \left( (\vec q \vdot \vec A) F_1 - 
  \frac{1}{2} (\vec q \vdot  \vec b) (F_1^2 -F_2 ) 
  + \frac{1}{6} (q_1 + q_2 + q_3 ) ( F_1^3 - 3 F_1 F_2 + 2 F_3 )    \right)
  \\ 
  &\quad  - ( \vec q \vdot \vec b) \left( F_1 F_3 
  - \frac{1}{2} F_2 ( F_1^2 - F_2) 
  + \frac{1}{6} F_1 ( F_1^3 - 3 F_1 F_2 + 2 F_3  ) \right), 
  \end{split}
  \\
 \begin{split}
 	K_{3,2} & = F_3 \left( ( \vec q \vdot \vec A) F_1 
 	- \frac{1}{2} ( \vec q \vdot \vec b) ( F_1^2 -F_2 )
 	+ \frac{1}{6} ( q_1 + q_2 + q_3 ) ( F_1^3 -3 F_1 F_2 + 2 F_3 ) \right)
 	\\
 	&\quad - (\vec q \vdot \vec A)  \left(
 	F_1 F_3 - \frac{1}{2} F_2 ( F_1^2 - F_2 )
 	+ \frac{1}{6} F_1 ( F_1^3 - 3 F_1 F_2 + 2 F_3 ) 
 	\right),
 \end{split}	
\end{align}	
\end{subequations}
where we define the auxiliary vector $\vec A = (A_1, A_2, A_3) $ through its components:
\begin{equation}\label{eq:vec-A-definition}
	A_i \coloneqq b_i^2 + \chi^2 \sum_{  k \ne i }^N \frac{ b_i b_k }{ (q_i - q_k)^2 + \chi^2 }, 
	\qquad
	i=1,2,3. 
\end{equation}

Clearly, the abelian subalgebra (corresponding to Liouville integrability) is, from~\cref{eq:prop:cont-formal-rels-FF}: 
\begin{equation}\label{eq:proof:Symalg-N-cont:abelian}
	\{ F_1, F_2 \} =0, 
	\qquad
	\{ F_1, F_3 \} =0, 
	\qquad
	\{ F_2, F_3 \} =0.
\end{equation}	
Moreover, by construction, $ H_+ = F_1  $ is a central element of $\Alg^{(3)} $:
\begin{equation}\label{eq:proof:Symalg-N-cont:central-element}
	\{ F_1, \hyphen \} =0. 
\end{equation}
Then, using~\eqref{eq:prop:cont-formal-rels-FK} and~\eqref{eq:prop:cont-formal-rels-KK}, we formally obtain:  
\begin{subequations}\label{eq:ex:N=3-cont:formal} 
\begin{equation}\label{eq:ex:N=3-cont:formalFK}
\begin{array}{lll}
	\{ F_2, K_{2,1} \} = 2 (F_3^2  - F_2 F_4 ),
	& 
	\{ F_2, K_{3,1} \} = 2 ( F_3 F_4 - F_2 F_5 ),
	&
	\{ F_2, K_{3,2} \} = 2 ( F_4^2 - F_3 F_5 ),
	\\
	\{ F_3, K_{2,1} \} = 3 ( F_3 F_4 - F_2 F_5 ), 
	& 
	\{ F_3, K_{3,1} \} = 3 ( F_4^2  - F_2 F_6 ), 
	&
	\{ F_3 , K_{3,2} \} = 3 ( F_4 F_5 - F_3 F_6 ),
\end{array}
\end{equation}
and: 
\begin{equation}\label{eq:ex:N=3-cont:formalKK}
\begin{aligned}
\{ K_{2,1}, K_{3,1} \} & = 
F_2 (3 K_{1,5} + 2 K_{4,2} + 4 K_{5,1})  + 2F_3 (2 K_{1,4} +  K_{2,3}) + 3 F_4 K_{3,1},
\\ 
\{ K_{2,1}, K_{3,2} \} & = 
F_3 (3 K_{1,5} + 4 K_{2,4} + 2 K_{4,2})  
+ F_2 (4 K_{5,2} + 3 K_{3,4}) 
+  F_4 ( 2K_{2,3} + 3 K_{4,1} ), 
\\
\{ K_{3,1} , K_{3,2} \} & = F_2 (3 K_{3,5} + 4 K_{6,2}) 
+ 2F_3 ( 2 K_{1,6} +  K_{4,3}) + 4 F_4 (K_{2,4} + K_{5,1}). 
\end{aligned}	
\end{equation}
\end{subequations} 
However, note that the relations~\eqref{eq:ex:N=3-cont:formal} do not yet define a finitely-generated polynomial Poisson algebra because they contain ``extra'' elements, which are not included in the generating set $\genset^{(3)} $ \eqref{ex:N=3-cont:genset}, namely $\set{ F_4, F_5, F_6, K_{4,1},  K_{1,4}, K_{1,5}, K_{5,1}, K_{2,3}, K_{4,2}, K_{2,4}, K_{4,3}, K_{3,4} , K_{5,2}, K_{3,5}, K_{1,6}, K_{6,2}  } $. Nevertheless, we are able to express them purely in terms of the elements of $\genset^{(3)} $ using~\cref{prop:cont-closure-rels}. Applying the formulas~\eqref{eq:prop:cont-closure-rels} recursively, we obtain, after the necessary algebraic manipulations: 
\begin{subequations}\label{eq:ex:N=3-cont:extra-els-expressed}
\begin{align}
	F_4 & = \frac{1}{6} F_1^4 - F_1^2 F_2 + \frac{4}{3} F_1 F_3 + \frac{1}{2} F_2^2, 
	\label{eq:ex:N=3-cont:extra-els-expressed-F4}
	\\
	F_5 & = \frac{5}{6} F_1^2 F_3 - \frac{5}{6} F_1^3 F_2 + \frac{1}{6} F_1^5 + \frac{5}{6} F_2 F_3, 
	\\
	F_6 & = \frac{1}{3} F_1^3 F_3 - \frac{1}{4} F_1^4 F_2 + \frac{1}{12} F_1^6 + F_1 F_2 F_3 - \frac{3}{4} F_1^2 F_2^2 + \frac{1}{4} F_2^3 + \frac{1}{3} F_3^2, 
	\\
	K_{1,4}= -K_{4,1} &= \frac{1}{2} (F_1^2 - F_2) K_{2,1} - F_1 K_{3,1}, 
	\\ 
	K_{1,5} = - K_{5,1} & = \frac{1}{3} (F_1^3 - F_3) K_{2,1} - \frac{1}{2} (F_1^2 + F_2) K_{3,1}, 
	\\
	K_{2,3} &= - K_{3,2}, 
	\\
	K_{4,2} & = \frac{1}{6} (-F_1^3 + 3 F_1 F_2 - 2 F_3) K_{2,1} + F_1 K_{3,2}, 
	\\
	K_{2,4} & = - K_{4,2}, 
	\\
	K_{4,3} & = \frac{1}{2}(F_1^2 - F_2)K_{3,2} + \frac{1}{6}(-F_1^3 + 3F_1F_2 - 2F_3)K_{3,1}, 
	\\  
	K_{5,2} & = -\frac{1}{6} F_1^4 K_{2,1} + \frac{1}{6} F_1^2  (3 F_2 K_{2,1} + 3 K_{3,2}) - \frac{1}{3} F_1 F_3 K_{2,1} + \frac{1}{2} F_2 K_{3,2},
	\\
	K_{3,5}  = - K_{5,3} &= -\frac{1}{3} F_1^3 K_{3,2} + \frac{1}{6} F_1^4 K_{3,1} - \frac{1}{2} F_1^2 F_2 K_{3,1} + \frac{1}{3} F_1 F_3 K_{3,1} + \frac{1}{3} F_3 K_{3,2},
	\\
	\begin{split}
		K_{1,6}   = - K_{6,1} & =  \frac{1}{12} F_1^4 K_{2,1} - \frac{1}{6} F_1^3 K_{3,1} + \frac{1}{2} F_1^2 F_2 K_{2,1}
		\\ &
		  \quad\quad\quad\quad - \frac{1}{6}F_1 ( 2 F_3 K_{2,1} +3 F_2 K_{3,1} )  
		   -    \frac{1}{4} F_2^2 K_{2,1} - \frac{1}{3} F_3 K_{3,1}, 
	\end{split}
	\\
	\begin{split}
		K_{6,2} & = -\frac{1}{12} F_1^5 K_{2,1} + \frac{1}{6} F_1^3  (F_2 K_{2,1} + K_{3,2}) 
		\\ & \quad\quad  - \frac{1}{6} F_1^2 F_3 K_{2,1} 
		+ \frac{1}{4}F_1 (F_2^2 K_{2,1} + 2 F_2 K_{3,2})  - \frac{1}{6} F_3 (F_2 K_{2,1} - 2 K_{3,2}). 
	\end{split}
\end{align}	
\end{subequations}  
Finally, substituting the closure relations~\eqref{eq:ex:N=3-cont:extra-els-expressed} into~\eqref{eq:ex:N=3-cont:formal}, we arrive at the complete set of structure relations among the generators of the symmetry algebra of the three-body RS model:
\begin{subequations}\label{eq:ex:N=3-cont:all-rels}
\begin{align}
\{ F_2, K_{2,1} \} & = - \frac{1}{3} (F_1^4 + 8 F_1 F_3) F_2 + 2 F_1^2 F_2^2 - F_2^3 + 2 F_3^2,
\\ 
\{ F_2, K_{3,1} \} & =  \frac{1}{3} \left( -F_1^5 F_2 + F_1^4 F_3 + 5 F_1^3 F_2^2 - 11 F_1^2 F_2 F_3 + 8 F_1 F_3^2 - 2 F_2^2 F_3 \right), 
\\
\begin{split}
	\{ F_2, K_{3,2} \} &=  \frac{1}{18} \bigl( F_1^8 - 12 F_1^6 F_2 + 10 F_1^5 F_3 
	 + 42 F_1^4 F_2^2 - 66 F_1^3 F_2 F_3 + 34 F_1^2 F_3^2 
	 \\ &\quad  - 36 F_1^2 F_2^3 + 48 F_1 F_2^2 F_3 + 9 F_2^4 - 30 F_2 F_3^2 \bigr), 
\end{split} 
\\
\{ F_3, K_{2,1} \} & =  \frac{1}{2} \left( -F_1^5 F_2 + F_1^4 F_3 + 5 F_1^3 F_2^2 - 11 F_1^2 F_2 F_3 + 8 F_1 F_3^2 - 2 F_2^2 F_3 \right), 
\\ 
\begin{split}
\{ F_3, K_{3,1} \} & = \frac{1}{12} \bigl( F_1^8 - 15 F_1^6 F_2 + 16 F_1^5 F_3 + 51 F_1^4 F_2^2 - 108 F_1^3 F_2 F_3 
 \\ & \quad  + 64 F_1^2 F_3^2 - 9 F_1^2 F_2^3 + 12 F_1 F_2^2 F_3 - 12 F_2 F_3^2 \bigr), 
\end{split}
\\
\begin{split}
\{ F_3, K_{3,2} \} & =  \frac{1}{12} \bigl( F_1^9 - 11 F_1^7 F_2 + 10 F_1^6 F_3 + 33 F_1^5 F_2^2 - 56 F_1^4 F_2 F_3 - 15 F_1^3 F_2^3 + 28 F_1^3 F_3^2  \\ & \quad  + 12 F_1^2 F_2^2 F_3 + 4 F_1 F_2 F_3^2 + 6 F_2^3 F_3 - 12 F_3^3 \bigr), \label{eq:ex:N=3-cont:F3-K32} 	
\end{split}
\\
\begin{split}
 \{ K_{2,1}, K_{3,1} \} & =  \frac{1}{6} \bigl( 3 F_1^4 K_{3,1} - 4 F_1^3 F_2 K_{2,1} - 15 F_1^2 F_2 K_{3,1} + 12 F_1^2 F_3 K_{2,1} + 6 F_1 F_2^2 K_{2,1} \\ & \quad  + 12 F_1 F_2 K_{3,2} + 12 F_2^2 K_{3,1} - 14 F_2 F_3 K_{2,1} - 12 F_3 K_{3,2} \bigr), 
\end{split}
\\ 
\begin{split}
\{ K_{2,1}, K_{3,2} \} & =  \frac{1}{12} \bigl( -3 F_1^6 K_{2,1} + 6 F_1^5 K_{3,1} + 13 F_1^4 F_2 K_{2,1} - 4 F_1^4 K_{3,2} - 30 F_1^3 F_2 K_{3,1} - 8 F_1^3 F_3 K_{2,1} 
\\ & \quad  - 3 F_1^2 F_2^2 K_{2,1} + 30 F_1^2 F_2 K_{3,2} + 30 F_1^2 F_3 K_{3,1} - 4 F_1 F_2 F_3 K_{2,1} - 56 F_1 F_3 K_{3,2} 
\\ & \quad  + 9 F_2^3 K_{2,1} + 30 F_2^2 K_{3,2} - 6 F_2 F_3 K_{3,1} - 4 F_3^2 K_{2,1} \bigr),	
\end{split}
\\
\begin{split}
\{ K_{3,1}, K_{3,2} \} & =  \frac{1}{18} \bigl( -2 F_1^7 K_{2,1} + 6 F_1^6 K_{3,1} - 12 F_1^5 K_{3,2} - 21 F_1^4 F_2 K_{3,1} - 2 F_1^4 F_3 K_{2,1} + 42 F_1^3 F_2^2 K_{2,1} 
\\ &\quad  + 66 F_1^3 F_2 K_{3,2} + 30 F_1^3 F_3 K_{3,1} - 45 F_1^2 F_2^2 K_{3,1} - 72 F_1^2 F_2 F_3 K_{2,1} - 78 F_1^2 F_3 K_{3,2} 
\\ &\quad + 48 F_1 F_2 F_3 K_{3,1} + 40 F_1 F_3^2 K_{2,1} + 18 F_2^3 K_{3,1} - 6 F_2^2 F_3 K_{2,1} + 24 F_2 F_3 K_{3,2} - 36 F_3^2 K_{3,1} \bigr).
\end{split}
\end{align}	
\end{subequations}   
Note that, as expected, we have $ \deg_{ \genset^{(3)} } \Alg^{(3)} = 9 $, see, \eg{},~\cref{eq:ex:N=3-cont:F3-K32}. Moreover, the difference of the cardinalities of the generating set and the subset of functionally independent integrals is given by:
\begin{equation}
	\bigl| \genset^{(3)} \bigr| - 
	\bigl| \genset^{(3)}_{\text{f.i.}} \bigr| = 6 - 5 =1, 
\end{equation} 
which means that in this case we should have one functional relation among the elements of~$\genset^{(3)}$. Note that this is in line with the counting~\eqref{eq:funrel-counting}. Indeed, from~\cref{eq:funrel-cont-G}, the functional relation, which connects $K_{3,2} $ to the elements of $\genset^{(3)}_{\text{f.i.}} $, is given by:
\begin{equation}\label{eq:ex:N=3-cont:funrel} 
	F_{3} K_{3,1} - F_{2} K_{3,2} - \left( \frac{1}{6} F_1^4 - F_1^2 F_2 + \frac{4}{3} F_1 F_3 + \frac{1}{2} F_2^2  \right) K_{2,1} =0, 
\end{equation}
where we used the additional closure relation~\eqref{eq:ex:N=3-cont:extra-els-expressed-F4}.  
\end{example}

\subsection{Deformed symmetry algebra in discrete-time case}
\label{sec:Symalg-discr}
We now proceed to the discrete-time case. Our goal is to obtain the complete structure of the symmetry algebra of the discrete RS model (which can be done in a similar fashion to the continuous-time case), and to show that this symmetry algebra is a deformation of its continuous-time counterpart. Since the latter phenomenon emerges already in the two-body case, let us again start by presenting an example. 

\begin{example}[$N=2$]\label{ex:N=2-discr}
	For the discrete-time two-body model, the symmetry algebra is generated by the set (see~\cref{prop:discrete-system-is-MS}): 
\begin{equation}
	\widetilde{\genset}^{(2)} = \widetilde{\genset}^{(2)}_{\text{f.i.}}
	= \set{  F_1, F_2, \K_{2,1} },
\end{equation}	 
where $F_1 $ and $F_2 $ remain the same as in the continuous case, see~\cref{eq:ex:N=2-cont-Hplus-explicit-expression,eq:ex:N=2-cont-explicit-expressions}, and the modified integral of motion is given by~\eqref{eq:discrete-additional-constants-def}; it can be rewritten explicitly in the two-body case as follows:
\begin{equation}
	\K_{2,1} = K_{2,1} - \frac{1}{4} \imath \chi^{-1} h  \, (q_1 - q_2 ) ( b_1 - b_2 ) ( F_1^2 - F_2)^2, 
\end{equation} 
where $K_{2,1} $ is a constant of motion for the continuous-time system, see~\cref{eq:ex:N=2-cont-explicit-expressions}. Computing the Poisson brackets, we arrive at the following structure relations: 
\begin{subequations}\label{eq:ex:N=2-discr-structure-relations}
\begin{align}
	\{ F_1, F_2 \} & =0, \label{eq:ex:N=2-discr-structure-relations-FF}
	\\ 
	\{ F_1, \K_{2,1} \} & = \frac{1}{4} \imath h \chi^{-1}   \, 
	(2F_2 -F_1^2 ) ( F_1^2 - F_2)^2, 
	\label{eq:ex:N=2-discr-structure-relations-F1K}
	\\ 
	\{ F_2, \K_{2,1} \} & = \frac{1}{2}  ( F_1^2 - 2 F_2 ) ( F_1^2 - F_2 )^2 
	- \frac{1}{2} \imath h \chi^{-1}  \, F_1 ( F_1^2 - 2 F_2 )
	(F_1^2 - F_2 )^2.  
	\label{eq:ex:N=2-discr-structure-relations-F2K}
\end{align}	
\end{subequations}
Comparing~\eqref{eq:ex:N=2-discr-structure-relations} with the analogous continuous-time relations~\eqref{eq:ex:N=2-cont-structure-relations} from~\cref{ex:N=2-cont}, we observe that:
\begin{itemize}
	\item The relation~\eqref{eq:ex:N=2-discr-structure-relations-FF} remains unaltered under discretization. 
	\item The relation~\eqref{eq:ex:N=2-discr-structure-relations-F1K} is now nonzero, which is aligned with the fact that in the discrete-time case there is no Hamiltonian in the usual sense, \ie{}, $F_1 $ is no longer a central element of the symmetry algebra (see the discussion in the introductory~\cref{sec:SI-review}). However, in the continuous limit we recover the conservation of $K_{2,1} $ for the continuous-time RS~model:
	\begin{equation}
		\lim_{h \to 0 } \{ F_1, \K_{2,1} \} =0. 
	\end{equation}
	\item The relation~\eqref{eq:ex:N=2-discr-structure-relations-F2K} is deformed compared to its continuous counterpart, and the time step $h $ plays the r\^ole of the deformation parameter. Indeed, evaluating~\eqref{eq:ex:N=2-discr-structure-relations-F2K} at $h=0 $, we see that the bracket reduces to its continuous counterpart~\eqref{eq:ex:N=2-cont-structure-relations}: 
	\begin{equation}
		\lim_{h \to 0} \{ F_2, \K_{2,1} \} 
		= \frac{1}{2} ( F_1^2 - F_2 )^2 ( F_1^2 - 2 F_2 )  = 
		\{ F_2, K_{2,1} \}. 
	\end{equation}
\end{itemize} 
Thus, we observe that in the two-body case, discretization deforms a symmetry algebra of the model. The continuous limit is given simply by evaluating the expression at $h=0 $, without the necessity of any parameter rescaling. 
 Moreover, discretization increases the degree of a polynomial algebra: while the degree in the continuous case was $6$, here the degree is $7 $ (\cf{}~\cref{ex:N=2-cont}), provided by the additional multiplier $F_1 $ in the deformation term in~\cref{eq:ex:N=2-discr-structure-relations-F2K}. Below we prove that this phenomenon remains true for the arbitrary-$N$ case. 
 
 A similar pattern was observed in the (nonrelativistic) Calogero--Moser model; see \cite{DGL2026Calogero}. We will discuss their common features and their differences later in~\cref{sec:concl}. 
\end{example}

Let us now proceed to the arbitrary $N$-body case. We start by extending~\cref{prop:cont-formal-rels} to the discrete-time model. 
 
\begin{prop}\label{prop:discr-formal-rels}
The integrals of motion of the discrete-time RS model obey the following relations:%
\begin{subequations}\label{eq:prop:discr-formal-rels}%
\begin{align}%
 \{ F_k, F_m \} & = 0, 
\label{eq:prop:discr-formal-rels-FF}
 \\ 
	\{ F_k, \K_{m,n} \} &= k \left( F_{m+1} F_{k+n} - F_{n+1} F_{k+m} \right)
      - \imath k h \chi^{-1} \left( F_{m+1} F_{k+n+1} - F_{n+1} F_{k+m+1} \right), 
	\label{eq:prop:discr-formal-rels-FK}
	\\
	\begin{split}
		\{ \K_{i,j}, \K_{m,n} \} &= 
	 F_{i+1} \bigl( (m+1) \K_{n, j+m} + (n+1) \K_{j+n, m}  \bigl) 
	 + F_{j+1} \bigl( (m+1) \K_{i+m,n} +(n+1) \K_{m, i+n}  \bigr) \\ 
	 & \quad + F_{m+1} \bigl( (i+1) \K_{i+n, j} + (j+1) \K_{i, j+n}  \bigr)
	 + F_{n+1} \bigl( (i+1) \K_{j, i+m} + (j+1) \K_{j+m,i} \bigr)
	 \\
	 & \quad - \imath  h \chi^{-1}  \bigl[
	 F_{i+1} \bigl( (m+1) \K_{n, j+m+1} + (n+1) \K_{j+n+1, m}  \bigl) 
	 + F_{j+1} \bigl( (m+1) \K_{i+m+1,n} +(n+1) \K_{m, i+n+1}  \bigr) 
	 \\
	 & \quad\quad  
	 + F_{m+1} \bigl( (i+1) \K_{i+n+1, j} + (j+1) \K_{i, j+n+1}  \bigr)
	 + F_{n+1} \bigl( (i+1) \K_{j, i+m+1} + (j+1) \K_{j+m+1,i} \bigr) 
	 \bigr]. 
	 \label{eq:prop:discr-formal-rels-KK}
	\end{split}
\end{align}	
\end{subequations}
\end{prop}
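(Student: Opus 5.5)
The plan is to reduce everything to the mechanism behind \cref{prop:cont-formal-rels}, after isolating the single modified building block. Set $\varepsilon \coloneqq h\chi^{-1}$ and introduce
\begin{equation*}
	\widetilde J_k \coloneqq \trace\left[ X \left(\1 - \imath \varepsilon L\right) L^k \right] = J_k - \imath \varepsilon J_{k+1},
\end{equation*}
so that, by \eqref{eq:discrete-additional-constants-def}, $\K_{m,n} = \widetilde J_m F_{n+1} - \widetilde J_n F_{m+1}$. This is the same bilinear form in $(\widetilde J, F)$ as $K_{m,n}$ in $(J,F)$; compare \cref{eq:Kdiscr-via-flows}.

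The first step is to derive the analogues of \eqref{eq:JFandJJ} for $\widetilde J_k$. Here we use only the continuous brackets \eqref{eq:JFandJJ}, which hold for all integer indices, together with bilinearity. For the bracket with $F_j$ this gives
\begin{equation*}
	\{ \widetilde J_k, F_j \} = j\left(F_{k+j} - \imath\varepsilon F_{k+j+1}\right).
\end{equation*}
For the bracket of two such functions we get
\begin{equation*}
	\{\widetilde J_k,\widetilde J_j\} = (j-k)J_{k+j} - \imath\varepsilon\bigl[(j+1-k)+(j-k-1)\bigr]J_{k+j+1} - \varepsilon^2 (j-k)J_{k+j+2}.
\end{equation*}
The right-hand side equals $(j-k)\bigl(\widetilde J_{k+j} - \imath\varepsilon \widetilde J_{k+j+1}\bigr)$. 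Hence the $\widetilde J$'s obey the continuous relations, except that every output index $r$ is replaced by the combination ``index $r$ minus $\imath\varepsilon$ times index $r+1$''. Relation \eqref{eq:prop:discr-formal-rels-FF} is unchanged from the continuous case.

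Relation \eqref{eq:prop:discr-formal-rels-FK} then follows immediately from the Leibniz rule, since the $F$'s commute among themselves:
\begin{equation*}
	\{F_k, \K_{m,n}\} = -F_{n+1}\{\widetilde J_m,F_k\} + F_{m+1}\{\widetilde J_n, F_k\}.
\end{equation*}
Substituting the formula above yields exactly the undeformed term plus the $-\imath k \varepsilon$ term with both $F$-indices shifted by one.

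For \eqref{eq:prop:discr-formal-rels-KK}, I will repeat the proof of \cref{prop:cont-formal-rels} with $\widetilde R_{m,n}\coloneqq \widetilde J_m F_{n+1}$, so that $\K_{m,n} = \widetilde R_{m,n}-\widetilde R_{n,m}$. Expanding $\{\widetilde J_iF_{j+1},\widetilde J_kF_{l+1}\}$ by Leibniz produces three kinds of terms, and in each one the deformation lands on the summed index.
\begin{itemize}
	\item The term $\widetilde J_i F_{l+1}\{F_{j+1},\widetilde J_k\}$ gives $-(j+1)F_{l+1}\bigl(\widetilde R_{i,j+k} - \imath\varepsilon \widetilde R_{i,j+k+1}\bigr)$.
	\item The term $F_{j+1}\widetilde J_k\{\widetilde J_i,F_{l+1}\}$ gives $(l+1)F_{j+1}\bigl(\widetilde R_{k,i+l}-\imath\varepsilon\widetilde R_{k,i+l+1}\bigr)$.
	\item The term $F_{j+1}F_{l+1}\{\widetilde J_i,\widetilde J_k\}$ gives $(k-i)F_{j+1}\bigl(\widetilde R_{i+k,l} - \imath\varepsilon \widetilde R_{i+k+1,l}\bigr)$.
\end{itemize}
Thus $\{\widetilde R_{i,j},\widetilde R_{k,l}\}$ is the continuous auxiliary formula applied to $\widetilde R$, minus $\imath\varepsilon$ times the same expression with the ``summed'' index raised by one.

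Antisymmetrizing in $(i,j)$ and $(k,l)$, exactly as in the continuous proof, reproduces the first two lines of \eqref{eq:prop:discr-formal-rels-KK} and the bracketed $\imath h\chi^{-1}$ correction. In the correction, the index that arises as a sum in each $\K$ is raised by one. Since the antisymmetrization is linear and the index shift commutes with it, this step is formally identical to the continuous case.

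I expect the main obstacle to be bookkeeping rather than a conceptual difficulty. Two points need care. First, the $(k-i)$ terms from $\{\widetilde J,\widetilde J\}$ must recombine, after antisymmetrization, into the coefficients $(m+1),(n+1),(i+1),(j+1)$ shown in the statement. Second, in every resulting $\K$ the shift must fall on the summed index, for instance $\K_{n,j+m+1}$, and not on the other one. I would check both points by comparing term by term with the continuous antisymmetrization in \cite[proposition II.3]{DGL2026Calogero}. As a consistency check, I would also verify the $N=2$ specialization against \eqref{eq:ex:N=2-discr-structure-relations}.
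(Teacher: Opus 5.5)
Your proof is correct, but it takes a different route from the paper's.

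\textbf{The paper's route.} It writes $\K_{m,n}=K_{m,n}-\imath h\chi^{-1}K^{(0)}_{m+1,n+1}$ and expands each bracket into pieces of order $h^0$, $h^1$ and $h^2$. It then derives general mixed-flow formulas for $\{F_k,K^{(a)}_{m,n}\}$ and $\{K^{(a)}_{i,j},K^{(b)}_{m,n}\}$, including the extra $(a-1)$, $(b-1)$ terms with shifted upper indices, and recombines the result via \eqref{eq:Kdiscr-via-flows}.

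\textbf{Your route.} You deform the building block instead, $\widetilde J_k=J_k-\imath\varepsilon J_{k+1}$. You observe that $\{\widetilde J_k,F_j\}$ and $\{\widetilde J_k,\widetilde J_j\}$ are the continuous relations with every output $Y_r$ replaced by $Y_r-\imath\varepsilon Y_{r+1}$. Equivalently, with $P=\1-\imath\varepsilon L$, one has $\{\widetilde J_k,F_j\}=j\trace(PL^{k+j})$ and $\{\widetilde J_k,\widetilde J_j\}=(j-k)\trace(XP^2L^{k+j})$. This absorbs the $\varepsilon^2$ contributions automatically. It also explains structurally why the correction is exactly ``raise the summed index by one'' and why no $h^2$ term survives.

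\textbf{What each route buys.} The paper's route costs more algebra, but as a by-product it gives general brackets among the $K^{(a)}_{m,n}$ of different flows. That fits the hierarchy viewpoint of \cref{rmk:RSvsCM-discrete-via-gen-flows}.

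\textbf{One justification to tighten.} ``The shift commutes with antisymmetrization'' is not quite the reason your last step works. Recombining into $\K$'s also splits coefficients such as $m-j=(m+1)-(j+1)$. It also uses $F_{i+1}\widetilde J_sF_{n+1}=F_{n+1}\widetilde J_sF_{i+1}$ to pair terms coming from different $\widetilde R$'s. The actual reason has three parts:
\begin{enumerate}
\item every term carries exactly one cross-sum index $s\in\{i+m,i+n,j+m,j+n\}$, appearing either as $\widetilde J_s$ or as $F_{s+1}$;
\item the continuous regrouping only matches terms within a fixed $s$;
\item $\K_{b,s}-\imath\varepsilon\K_{b,s+1}=\widetilde J_b(F_{s+1}-\imath\varepsilon F_{s+2})-(\widetilde J_s-\imath\varepsilon\widetilde J_{s+1})F_{b+1}$.
\end{enumerate}
Together these make the uniform replacement compatible with the regrouping. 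Checking the four $s$-groups confirms this.
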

We remark that in the continuous limit, evaluating~eqs.~\eqref{eq:prop:discr-formal-rels} at $h=0 $ and taking into account that $\lim_{h \to 0 } \K_{i,j} = K_{i,j}  $ by~\cref{prop:discrete-additional-constants}, one recovers the continuous-time relations~\eqref{eq:prop:cont-formal-rels}. However, since the same reasoning as in the continuous case is applicable also here, we have to perform additional constructions to show that one can derive a finitely-generated Poisson algebra using the relations~\eqref{eq:prop:discr-formal-rels}. For now, we proceed to the proof of the statement.   

\begin{proof}[\cref{prop:discr-formal-rels}]
Using~\cref{eq:Kdiscr-via-flows} and the properties of the Poisson bracket, we can expand the left-hand side of eqs.~\eqref{eq:prop:discr-formal-rels} as follows:
\begin{subequations}\label{eq:proof:discr-formal-rels:expanded}
\begin{align}
	\{ F_k, \K_{m,n} \}  = \{ F_k, K_{m,n} \}  - & \imath h \chi^{-1}  \{ F_k,  K_{m+1, n+1}^{(0)} \},  
	\\ 
	\begin{split}
		\{ \K_{i,j}, \K_{m,n} \} = \{ K_{i,j}, K_{m,n} \} 
	- & \imath h \chi^{-1} \{ K_{i,j}, K_{m+1, n+1}^{(0)} \} 
	- \imath h \chi^{-1} \{ K_{i+1, j+1}^{(0)}, K_{m,n} \} 
	\\ 
	- & h^2 \chi^{-2} \{ K_{i+1, j+1}^{(0)}, K_{m+1, n+1}^{(0)} \}.  
	\end{split}
\end{align}	
\end{subequations}
Hence, we need to compute the auxiliary relations of the form $  \{ F_k,  K_{m, n}^{(0)} \} $, $  \{ K_{i,j}, K_{m, n}^{(0)} \} $ and $  \{ K_{i, j}^{(0)}, K_{m, n}^{(0)} \} $. In this regard, using~\cref{eq:JFandJJ,eq:Ks-new-def}, and recombining the elements, one can prove that, in general:
\begin{subequations}\label{eq:proof:discr-formal-rels:higher-flows}
\begin{align}
\{ F_k, K_{m, n}^{(a)} \} & = k ( F_{a+m} F_{k+n} - F_{a+n} F_{k+m} ),
\\
\begin{split}
\{ K_{i, j}^{(a)}, K_{m,n}^{(b)}  \} & =  
      F_{a+i}  \bigl( (m+1) K_{n, j+m}^{(b)} + (n+1) K_{j+n, m}^{(b)} \bigr)
    + F_{a+j}  \bigl( (m+1) K_{i+m,n}^{(b)} +(n+1) K_{m, i+n}^{(b)}  \bigr)  
    \\
    &\quad + F_{b+m}  \bigl( (i+1) K_{i+n, j}^{(a)} + (j+1) K_{i, j+n}^{(a)}  \bigr) 
    + F_{b+n} \bigl( (i+1) K_{j, i+m}^{(a)} + (j+1) K_{j+m,i}^{(a)} \bigr)
    \\
    &\quad + (a-1) \bigl( F_{b+m} K^{(a+n)}_{i, j} + F_{b+n} K^{(a+m)}_{j,i} \bigr)
    + (b-1) \bigl( F_{a+i} K^{(b+j)}_{n,m} + F_{a+j} K^{(b+i)}_{m,n} \bigr). 
\end{split}
\end{align}	
\end{subequations} 
These relations constitute a generalization of the formulas~\eqref{eq:prop:cont-formal-rels}, which are recovered from~\eqref{eq:proof:discr-formal-rels:higher-flows} by setting $a = b = 1 $. Using eqs.~\eqref{eq:proof:discr-formal-rels:higher-flows} to compute the brackets in~\eqref{eq:proof:discr-formal-rels:expanded}, performing the necessary algebraic manipulations, and recombining the elements using~\eqref{eq:Kdiscr-via-flows} again, we obtain the right-hand sides of~eqs.~\eqref{eq:prop:discr-formal-rels} and prove the statement.   
\end{proof}

Now we can generalize~\cref{prop:cont-closure-rels} to the discrete case:

\begin{prop}\label{prop:discr-closure-rels}
Consider the extended set of integrals of motion for the discrete RS model:
\begin{equation}\label{eq:gensetN-discr-extended}
	\widetilde{\genset}^{(N)} \coloneqq \set{ F_n  }_{n=1}^N 
	\cup 
	\set{ \K_{n,m} }_{1 \leq m < n \leq N },  
\end{equation}	
such that all functionally independent integrals~\eqref{eq:discr-SN-fi} form a subset \emph{$ \widetilde{\genset}^{(N)}_{\text{f.i.}} \subset \widetilde{\genset}^{(N)}  $}. Then, the Poisson bracket between any pair of elements of $\widetilde{\genset}^{(N)} $, given by the formulas~\eqref{eq:prop:discr-formal-rels}, can be expressed again purely in terms of elements of $\widetilde{\genset}^{(N)} $, using the antisymmetry condition $\K_{m,n} = - \K_{n,m} $, and the following recursive formulas for $s>0 $:%
\begin{subequations}\label{eq:prop:discr-closure-rels}
\begin{align}
	F_{N+s} &= - \sum_{k=1}^N \frac{1}{k!} 
	\Bell_k \bigl( -0! F_1, \, -1! F_2, \, -2! F_3, \ldots, -(k-1)! F_k \bigr) F_{N+s-k},  
	\label{eq:prop:discr-closure-rels-F}
	\\
	\K_{N+s, \ell } & = - \sum_{k=1}^N \frac{1}{k!} 
	\Bell_k \bigl( -0! F_1, \, -1! F_2, \, -2! F_3, \ldots, -(k-1)! F_k \bigr) \K_{N+s-k, \ell }. 
	\label{eq:prop:discr-closure-rels-K}
\end{align}	
\end{subequations} 
where $\Bell_k ( \hyphen )   $ is the $k$-\emph{th} complete exponential Bell polynomial.  
\end{prop}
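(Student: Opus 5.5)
The plan mirrors the proof of \cref{prop:cont-closure-rels}, with the key observation that $\K_{m,n}$ depends on the index $m$ only through the traces $\trace(L^{m+1})$ and $\trace[X(\1 - \imath h\chi^{-1}L)L^m]$, and both obey the same Cayley--Hamilton recursion. Relation \eqref{eq:prop:discr-closure-rels-F} is literally \eqref{eq:prop:cont-closure-rels-F}. The $F_k$ are unchanged under discretization and are traces of powers of the same Lax matrix $L$, so we simply invoke \cref{eq:Bell-fla-generic} with $A=L$.

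For the $\K$'s, introduce the discrete analogue of $J_k$:
\begin{equation*}
	\widetilde J_k \coloneqq \trace\left[ X \left( \1 - \imath h \chi^{-1} L \right) L^k \right] = J_k - \imath h\chi^{-1} J_{k+1},
\end{equation*}
so that $\K_{m,n} = \widetilde J_m F_{n+1} - \widetilde J_n F_{m+1}$. The Cayley--Hamilton theorem gives the matrix identity
\begin{equation*}
	L^{N+s} = -\sum_{k=1}^N c_k L^{N+s-k}, \qquad c_k \coloneqq \tfrac{1}{k!}\Bell_k\bigl(-0!F_1,\ldots,-(k-1)!F_k\bigr).
\end{equation*}
Here the $c_k$ are exactly the characteristic-polynomial coefficients written via Newton's identities, as used in \cref{eq:Bell-fla-generic}. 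Left-multiplying by $X(\1-\imath h\chi^{-1}L)$ and taking the trace yields $\widetilde J_{N+s} = -\sum_k c_k \widetilde J_{N+s-k}$. This holds equally with the continuous $J$'s, since the recursion is a matrix identity and any linear functional $A\mapsto \trace(BA)$ preserves it; this is the same step as \cref{eq:proof:cont-closure-rels-JN+s}. Similarly, $F_{N+s+1} = -\sum_k c_k F_{N+s+1-k}$ follows from \cref{eq:Bell-fla-generic}.

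Next, substitute into
\begin{equation*}
	\K_{N+s,\ell} = \widetilde J_{N+s}F_{\ell+1} - \widetilde J_\ell F_{N+s+1}.
\end{equation*}
Both terms carry the same coefficients $-c_k$ with shifted first index, so linearity gives $\K_{N+s,\ell} = -\sum_k c_k(\widetilde J_{N+s-k}F_{\ell+1} - \widetilde J_\ell F_{N+s+1-k})$. The bracket is exactly $\K_{N+s-k,\ell}$, which is \eqref{eq:prop:discr-closure-rels-K}. Antisymmetry handles $\K_{\ell,N+s}$.

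To conclude closure, note that the right-hand sides of \eqref{eq:prop:discr-formal-rels} involve only $F$'s and $\K$'s with integer indices. These indices are at least $1$ whenever the left-hand indices lie in $\widetilde\genset^{(N)}$, since $k,m,n,i,j\ge1$; the only possible exception is $\K_{a,a}=0$, which vanishes by antisymmetry. Applying the recursions starting from $s=1$ reduces every index to the range $1,\ldots,N$. Indices greater than $N$ in either slot of $\K$ are handled by applying the recursion to the first slot, then antisymmetry, then the recursion again. Once every index lies in $1,\ldots,N$, any $\K$ with first index below the second is flipped by antisymmetry. The result lies in the polynomial algebra generated by $\widetilde\genset^{(N)}$.

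The main point to check carefully is that the deformation factor $(\1-\imath h\chi^{-1}L)$ does not spoil the recursion. An extra power of $L$ shifts the index of the $J$'s but not the recursion coefficients. This holds because the factor sits inside the trace to the left of $L^k$, and it commutes with $L$. One must also make sure that $F_{N+s+1}$ in the second term is reduced with the same $c_k$, so that the two pieces recombine into $\K$ rather than leaving mixed terms; the consistent index shift $F_{(N+s-k)+1}$ ensures this.
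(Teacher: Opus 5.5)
Your proposal is correct and is essentially the paper's argument. The paper's proof just says to repeat the continuous-case steps: the Bell-polynomial form of Cayley--Hamilton for $F_{N+s}$ and for the $X$-weighted traces, recombined into $\K_{N+s,\ell}$, which is what you make explicit with $\widetilde J_k = J_k - \imath h\chi^{-1}J_{k+1}$; note that commutativity of $\1-\imath h\chi^{-1}L$ with $L$ is not actually needed, since left-multiplying the matrix identity $L^{N+s}=-\sum_k c_k L^{N+s-k}$ by any fixed matrix and taking the trace already yields the recursion.
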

\begin{proof}
The proof follows the same steps as in the continuous case. 	
\end{proof}

Ultimately, these constructions enable us to extend~\cref{prop:Symalg-N-cont} to the discrete case, thereby obtaining the complete description of the symmetry algebra of the discrete RS model.%
\begin{prop}\label{prop:Symalg-N-discr}
The symmetry algebra $\widetilde{\Alg}^{(N)} $ of the discrete $N$-body RS model is a polynomial Poisson algebra, generated by the elements of the set $\widetilde{\genset}^{(N)} $, defined by~\cref{eq:gensetN-discr-extended}. Its Poisson structure relations are given in eqs.~\eqref{eq:prop:discr-formal-rels}, provided the additional closure relations in~\cref{prop:discr-closure-rels}. Furthermore, the set $ \braket{F_i}_{i=1}^N $ is an ideal of $\widetilde{\Alg}^{(N)} $. Finally, the degree of the polynomial algebra $\widetilde{\Alg}^{(N)} $ is~\mbox{$3N+1$}.  
\end{prop}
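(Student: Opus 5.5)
The plan mirrors the proof of \cref{prop:Symalg-N-cont}. Generation and closure come directly from earlier results. By \cref{prop:discr-formal-rels}, every bracket between elements of $\widetilde{\genset}^{(N)}$ is a polynomial in the quantities $F_k$ and $\K_{m,n}$, now with arbitrary indices. By \cref{prop:discr-closure-rels}, together with the antisymmetry $\K_{m,n}=-\K_{n,m}$, every such quantity with an index exceeding $N$ reduces to a polynomial in the elements of $\widetilde{\genset}^{(N)}$. The same holds for indices equal to $N+1$ that arise from the shifted terms $\K_{\cdot,\,\cdot+1}$. Hence $\widetilde{\Alg}^{(N)}$ is a finitely generated polynomial Poisson algebra.

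\textbf{Ideal.} I would read off the ideal property of $\braket{F_i}_{i=1}^N$ from \cref{eq:prop:discr-formal-rels-FF,eq:prop:discr-formal-rels-FK}. Both right-hand sides involve only the $F$'s, and \cref{eq:prop:discr-closure-rels-F} rewrites them purely in terms of $F_1,\ldots,F_N$. So brackets of the $F$'s with any generator stay inside the subalgebra generated by the $F$'s, exactly as in the continuous case.

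\textbf{Degree.} The key step is to show $\deg\widetilde{\Alg}^{(N)}=3N+1$. As in \cref{proof:Symalg-N-cont:degAlg-1}, I write the degree as the maximum over the brackets $\{F_i,\K_{m,n}\}$ and $\{\K_{i,j},\K_{m,n}\}$. Using the Bell-polynomial recursions as before, one gets
\[
\deg_{\widetilde{\genset}^{(N)}}F_{N+s}=N+s,\qquad \deg_{\widetilde{\genset}^{(N)}}\K_{N+s,\ell}=N+s .
\]
Each $h$-correction in \eqref{eq:prop:discr-formal-rels} shifts exactly one index up by one compared with the undeformed term. It therefore raises the degree by at most one, which gives the upper bound $3N+1$.

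For the maximum in the $FK$ brackets, take $\{F_N,\K_{N,N-1}\}$. Its correction term $-\imath Nh\chi^{-1}(F_{N+1}F_{2N}-F_N F_{2N+1})$ has degree $3N+1$. For the $KK$ brackets, take $\{\K_{N,N-2},\K_{N,N-1}\}$, whose correction contains terms such as $F_{N+1}\K_{2N,N-2}$, again of degree $3N+1$.

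\textbf{Main obstacle.} The bound is easy; what needs care is showing that the top-degree part actually survives and does not cancel, for instance between $F_{N+1}F_{2N}$ and $F_N F_{2N+1}$ after the closure relations are applied.

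To settle this, I would extract the leading monomial in $F_1$. Since $F_1$ enters $B_k$ through the term $(-F_1)^k$, the recursion gives $F_{N+s}\sim c_s F_1^{N+s}+\ldots$ with explicit nonzero coefficients $c_s$. I would then check that the coefficient of $F_1^{3N+1}$ is nonzero, or, failing that, a coefficient of some other monomial such as $F_1^{2N+1}F_N$.

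As a sanity check and base case, \cref{ex:N=2-discr} gives degree $7=3\cdot2+1$ via the term $F_1(F_1^2-2F_2)(F_1^2-F_2)^2$. The $h$-correction in that example is exactly $F_1$ times a continuous top-degree structure. I would try to show this $F_1$-factor pattern in general, which would make non-cancellation follow from the continuous result $\deg\Alg^{(N)}=3N$.
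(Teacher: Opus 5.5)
Your route is the paper's. Its proof takes generation, closure and the ideal property from \cref{prop:discr-formal-rels,prop:discr-closure-rels}. For the degree it only notes that the $h$-terms raise the largest indices by one, and reruns the continuous count ``mutatis mutandis''. It never addresses cancellation, so flagging that is to your credit. As written, however, your degree argument has a false step and leaves the decisive point unproved.

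\textbf{Upper bound.} Shifting an index by one does not raise the degree by at most one: $F_N$ has degree $1$ but $F_{N+1}$ has degree $N+1$, and likewise $\K_{N,\ell}$ versus $\K_{N+1,\ell}$. The bound $3N+1$ is still true by a direct count. For $M>N$ one has $\deg F_M\le M$ and $\deg\K_{M,\ell}\le M$. Every term on the right of \eqref{eq:prop:discr-formal-rels-FK}--\eqref{eq:prop:discr-formal-rels-KK} is either $F_{N+1}$ times a factor of largest index $\le 2N$, or a degree-one factor times one of largest index $\le 2N+1$.

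\textbf{The $\K\K$ witness.} It fails for $N=3$. There $\K_{6,1}$ has degree $5$, because its would-be leading term $F_1^5\K_{1,1}$ vanishes; so $\deg\K_{N+s,\ell}=N+s$ is false for $\ell=1$. Consistently, the $\K\K$ brackets of the $N=3$ example have degree $9$. You do not need this witness.

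\textbf{The ideal.} What the argument actually shows is that the associative ideal generated by $F_1,\dots,F_N$ is a Poisson ideal, since the right-hand sides of \eqref{eq:prop:discr-formal-rels-FK} are polynomials in the $F$'s without constant term. The subalgebra generated by the $F$'s is not closed under brackets with the whole algebra, e.g.\ $\{F_2,\K_{2,1}^2\}=2\K_{2,1}\{F_2,\K_{2,1}\}$.

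\textbf{Lower bound.} It should come from $\{F_N,\K_{N,N-1}\}$ alone, and the cancellation you fear cannot occur, because $F_NF_{2N+1}$ has degree $2N+2<3N+1$. The right-hand side is a polynomial in the algebraically independent $F_1,\dots,F_N$, so its degree is canonical. It is weighted-homogeneous of weight $3N+1$ when $F_i$ has weight $i$. Hence its only possible monomial of degree $3N+1$ is $F_1^{3N+1}$, with coefficient $-\imath Nh\chi^{-1}c_{N+1}c_{2N}$, where $c_M$ is the coefficient of $F_1^M$ in $F_M$.

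This makes both of your fallbacks moot. The monomial $F_1^{2N+1}F_N$ has degree $2N+2$. The ``$F_1$-factor pattern'' is false already for $N=3$: the correction in \eqref{eq:ex:N=3-discr-allrels-F3-K32} contains $-9F_2^5$.

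What you still owe is $c_{N+1}c_{2N}\neq0$, which you only assert. Set $F_1=x$ and $F_2=\dots=F_N=0$ in the Bell/Newton relations. The elementary symmetric functions become $x^k/k!$, so $\det(\1-tL)$ specializes to $E_N(u)=\sum_{k=0}^N u^k/k!$ with $u=-xt$. Then
\[
\sum_{M\ge1}F_Mt^M=-t\,\frac{d}{dt}\log E_N(u)=-u+\frac{u^{N+1}}{N!\,E_N(u)},\qquad \frac{1}{E_N(u)}=e^{-u}+O\bigl(u^{N+1}\bigr),
\]
so $c_M=(-1)^{N+1}/\bigl(N!\,(M-N-1)!\bigr)$ for $N+1\le M\le 2N+1$. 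The top term of the bracket is therefore $-\imath h\chi^{-1}F_1^{3N+1}/\bigl(N!\,((N-1)!)^2\bigr)\neq0$. This matches the coefficients $-\tfrac12\imath h\chi^{-1}$ of $F_1^7$ and $-\tfrac1{24}\imath h\chi^{-1}$ of $F_1^{10}$ in the $N=2,3$ examples.
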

\begin{proof}
As before, we are only left to prove the degree assertion. For $N=2 $, it clearly holds; see~\cref{ex:N=2-discr}; hence let us assume $N>2 $. 
Note that due to the terms proportional to $h $ in relations~\eqref{eq:prop:discr-formal-rels}, the maximal indices of $F_k $ and $\K_{m,n} $ on the right-hand side were increased by one compared to the continuous-time case, \cf{} eqs.~\eqref{eq:prop:cont-formal-rels}. Taking this into account, we can apply the same reasoning as in the proof of~\cref{prop:Symalg-N-cont}, \mm{}. In this way, we obtain that
$ \deg_{\widetilde \genset^{(N)}} \widetilde \Alg^{(N)} = 3N+1 $.   
\end{proof}

Finally, the following observation is the main qualitative result of our symmetry algebra study. 

\begin{corollary}\label{cor:deformation}
	The symmetry algebra $\widetilde{\Alg}^{(N)}$ for the discrete-time RS model is a deformation of its continuous counterpart $\Alg^{(N)} $, with $h$ playing the r\^ole of the deformation parameter; that is, the structure relations~\eqref{eq:prop:discr-formal-rels} reduce to~\eqref{eq:prop:cont-formal-rels} in the continuous limit $h \to 0 $. Moreover, discretization increases the degree of the symmetry algebra by one (from $3N $ in the continuous-time case to $3N+1 $ in the discrete-time case).    
\end{corollary}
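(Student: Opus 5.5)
The corollary follows by assembling \cref{prop:discrete-additional-constants}, \cref{prop:discr-formal-rels}, \cref{prop:Symalg-N-cont} and \cref{prop:Symalg-N-discr}. First I will identify the two algebras on the level of generators. The generating sets $\genset^{(N)}$ and $\widetilde{\genset}^{(N)}$ in \cref{eq:gensetN-cont-extended,eq:gensetN-discr-extended} are in bijection via $F_k \mapsto F_k$ and $K_{m,n}\mapsto \K_{m,n}$. By \cref{prop:discrete-additional-constants}, or directly by \cref{eq:Kdiscr-via-flows}, we have $\K_{m,n} = K_{m,n} - \imath h\chi^{-1}K^{(0)}_{m+1,n+1}$, which is polynomial (in fact affine) in $h$ with $\K_{m,n}|_{h=0}=K_{m,n}$. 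So $\widetilde{\Alg}^{(N)}$ is a one-parameter family of Poisson algebras on the same set of abstract generators, and its structure constants depend polynomially on $h$.

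Second, I will check that the structure relations reduce correctly. Setting $h=0$ in \eqref{eq:prop:discr-formal-rels-FF}--\eqref{eq:prop:discr-formal-rels-KK} kills every term carrying the factor $\imath h\chi^{-1}$, and what remains is exactly \eqref{eq:prop:cont-formal-rels} after replacing $\K\to K$. I must also check that the closure relations commute with the limit. The Bell-polynomial formulas \eqref{eq:prop:discr-closure-rels} have $h$-independent coefficients, since they depend only on $F_1,\ldots,F_N$, and they coincide in form with \eqref{eq:prop:cont-closure-rels}. Hence reducing the formal relations to finitely generated ones and then setting $h=0$ gives the same result as doing it in the opposite order. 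The closed structure relations of $\widetilde{\Alg}^{(N)}$ are therefore polynomials $P_0+hP_1$ in the generators, with $P_0$ equal to the continuous relation, and this is precisely a deformation in the sense intended. As a sanity check I will verify the reduction against \cref{ex:N=2-discr}.

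Third, the degree statement is read off from the degree assertions of \cref{prop:Symalg-N-cont} and \cref{prop:Symalg-N-discr}: $3N$ versus $3N+1$. The main point needing care is that the top-degree contribution in the discrete case really is not cancelled, so that the degree increases by exactly one. I will handle this by exhibiting a single bracket. The natural choice is $\{F_N,\K_{N,N-1}\}$, whose $h$-part contains $F_{N+1}F_{2N}$, of degree $3N+1$ by \eqref{proof:Symalg-N-cont:degsFandK}. Its $h^1$ coefficient is $-\imath N\chi^{-1}(F_{N+1}F_{2N}-F_{N}F_{2N+1})$. Expanding both products via the Bell recursion, I would argue that the $F_1^{3N+1}$ monomial survives. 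The cleanest way is to evaluate at $L$ equal to a scalar matrix, so that $F_k=N\lambda^k$ is formally allowed as a polynomial identity check; alternatively I can compare leading coefficients in $F_1$. This nonvanishing check is the step I expect to be the only real obstacle, and if the generic-$N$ argument is awkward I will fall back on the fact that \cref{prop:Symalg-N-discr} already asserts the degree.
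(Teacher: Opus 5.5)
Your overall route is the paper's and is sound. The corollary is read off from \cref{prop:Symalg-N-cont,prop:Symalg-N-discr}, together with the remark after \cref{prop:discr-formal-rels}: setting $h=0$ (with $\K_{m,n}|_{h=0}=K_{m,n}$) in \eqref{eq:prop:discr-formal-rels} returns \eqref{eq:prop:cont-formal-rels}, and the Bell-polynomial closure relations are $h$-independent. For the degree, the paper only asserts $\deg F_{N+s}=N+s$ and never checks for cancellation, so your extra non-cancellation step goes beyond it.

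However, the method you list first for that step would not work. At $L=\lambda\1$ every $F_k=N\lambda^k$, so $F_{N+1}F_{2N}-F_NF_{2N+1}=N^2\lambda^{3N+1}-N^2\lambda^{3N+1}=0$ identically. More fundamentally, substituting $F_k=N\lambda^k$ into a weight-$(3N+1)$ polynomial in $F_1,\dots,F_N$ lumps all monomials into the single number $\sum_\alpha c_\alpha N^{|\alpha|}$, which here is $0$. That number says nothing about the coefficient of $F_1^{3N+1}$, the only monomial of that weight with ordinary degree $3N+1$.

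Your alternative, extracting the $F_1^{3N+1}$ coefficient, is the right test, and it has a closed form. Evaluate at a point with $F_1=1$, $F_2=\dots=F_N=0$; this is legitimate because the reduction formulas are universal identities for $N\times N$ matrices. Newton's identities give $e_j=1/j!$, hence
\[
\sum_{k\geq 1}F_k z^k=\frac{z\,T_{N-1}(-z)}{T_N(-z)}=z-\frac{(-1)^N z^{N+1}}{N!\,T_N(-z)},\qquad T_N(w)=\sum_{j=0}^N\frac{w^j}{j!}.
\]
Since $1/T_N(-z)=e^{z}+O(z^{N+1})$, you get $F_{N+j}=(-1)^{N-1}/(N!\,(j-1)!)$ for $1\leq j\leq N+1$.

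At this point $F_N=0$ for $N\geq2$. So the $F_1^{3N+1}$ coefficient of $\{F_N,\K_{N,N-1}\}$ is $-\imath h\chi^{-1}/(N!\,((N-1)!)^2)\neq0$. This reproduces $-\tfrac12\imath h\chi^{-1}$ in \cref{ex:N=2-discr} and $-\tfrac1{24}\imath h\chi^{-1}$ in \eqref{eq:ex:N=3-discr-allrels-F3-K32}. Likewise, the $F_1^{3N}$ coefficient of $\{F_N,K_{N,N-1}\}$ is $1/(N!\,(N-1)!\,(N-2)!)\neq0$, so the degree increases by exactly one.

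Your caution is warranted. The formula above only guarantees $\deg F_{N+s}=N+s$ for $s\leq N+1$, which is the range needed here, and the claim is false beyond it: for $N=2$ the $F_1^6$ coefficient of $F_6$ vanishes.
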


We conclude this section by presenting an application of the obtained statements to the three-body discrete-time RS model. 
\begin{example}[$N=3$]
The extended generating set~\eqref{eq:gensetN-discr-extended} for the symmetry algebra $\widetilde{\Alg}^{(3)} $ is given by:
\begin{equation}\label{ex:N=3-discr:genset}
	\widetilde{\genset}^{(3)} = \set{ F_1, F_2, F_3, \K_{2,1}, \K_{3,1}, \K_{3,2} },
\end{equation} 
with $\widetilde{\genset}^{(3)}_{\text{f.i.}} = \set{ F_1, F_2, F_3, \K_{2,1}, \K_{3,1} } $ consisting of $2N-1= 5 $ elements, providing maximal superintegrability of the discrete dynamical system. The expressions~\eqref{eq:proof:Symalg-N-cont:expl-Fs} for the Poisson-commuting invariants remain unchanged under discretization, while the modified additional constants can be expanded as follows:
\begin{subequations}\label{eq:proof:Symalg-N-discr:expl-Ks} 
\begin{align}
\begin{split}
\K_{2,1} & = K_{2,1} - \imath h \chi^{-1} 
	\left( ( \vec q \vdot \vec A) ( F_1 F_2 - F_3 )
	+\frac{1}{2} ( \vec q \vdot \vec b ) F_2 ( F_2 - F_1^2 ) \right. 
	\\ & \left. \qquad \quad
	+ \frac{1}{6} (q_1 + q_2 + q_3 ) F_2 ( F_1^3  - 3F_1 F_2 + 2 F_3)
	\right),  	
\end{split} 
\\
\begin{split}
\K_{3,1} & = K_{3,1} - \imath h \chi^{-1}  \left( 
 ( \vec q \vdot \vec A) \left( \frac{3}{2} F_1^2 F_2 - \frac{1}{6} F_1^4 - \frac{4}{3} F_1 F_3  \right) 
 + \frac{1}{3} ( \vec q \vdot \vec b ) F_2 ( F_3 - F_1^3 ) \right. 
 \\
 &\qquad \quad   
 \left. 
 + \frac{1}{6} ( q_1 + q_2 + q_3)  	
\, F_1 F_2 \left( F_1^3 - 3 F_1 F_2 + 2 F_3 \right) 
\right),  
\end{split}
\\
\begin{split}
\K_{3,2} & = K_{3,2} - \imath h \chi^{-1} \left(   ( \vec q \vdot \vec A) \left( - \frac{1}{6} F_1^5 + F_1^3 F_2 
- \frac{1}{2} F_1 F_2^2 - \frac{5}{6} F_1^2 F_3 
+ \frac{1}{2} F_2 F_3 \right)
\right. 
\\
& \qquad \quad + ( \vec q \vdot \vec b) 
\left( \frac{1}{12} F_1^6 - \frac{7}{12} F_1^4 F_2 
+ \frac{3}{4} F_1^2 F_2^2 - \frac{1}{4} F_2^3 
+\frac{1}{3} F_1^3 F_3 - \frac{2}{3} F_1 F_2 F_3 + \frac{1}{3} F_3^2  \right)
\\
& \left. \qquad \quad + \frac{1}{6} (q_1 + q_2 + q_3) 
\left( F_1^2 F_2 - \frac{1}{6} F_1^4 - \frac{1}{2}F_2^2 - \frac{1}{3} F_1 F_3 \right) 
( F_1^3 - 3 F_1 F_2 + 2 F_3 )
\right),  	
\end{split}
\end{align}	
\end{subequations}
where $K_{2,1}, K_{3,1} $, and $K_{3,2}$ are given by~\eqref{eq:proof:Symalg-N-cont:expl-Ks}, and $\vec A $ is defined by~\cref{eq:vec-A-definition}.  

Let us proceed to the structure relations. First, as before, we have:
\begin{equation}
	\{ F_1, F_2 \} =0, 
	\qquad
	\{ F_1, F_3 \} =0, 
	\qquad
	\{ F_2, F_3 \} =0. 
\end{equation}
Then, as in~\cref{ex:N=3-cont}, we first obtain formal relations using~\cref{prop:discr-formal-rels}, and then express all elements via~\cref{prop:discr-closure-rels}. After the algebraic manipulations, we arrive at the following relations, completely defining the symmetry algebra of the discrete three-body RS~model:   
\begin{subequations} \label{eq:ex:N=3-discr-allrels}
\begin{align}
\{ F_1, \K_{2,1} \} & = \frac{1}{6} \imath h \chi^{-1} ( 3 F_2^3 - 6 F_1^2 F_2^2 + F_1 F_2 (F_1^3 + 8 F_3)  - 6 F_3^2 ), \label{eq:ex:N=3-discr-allrels-F1-K21}
\\
\{ F_1, \K_{3,1} \} & = \frac{1}{6} \imath h \chi^{-1} \left( F_1^5 F_2 - F_1^4 F_3 - 5 F_1^3 F_2^2 + 11 F_1^2 F_2 F_3 - 8 F_1 F_3^2 + 2 F_2^2 F_3 \right),\label{eq:ex:N=3-discr-allrels-F1-K31}
\\
\begin{split}
\{ F_1, \K_{3,2} \} & = \frac{1}{36} \imath h \chi^{-1} \bigl( -F_1^8 + 12 F_1^6 F_2 - 10 F_1^5 F_3 - 42 F_1^4 F_2^2 + 66 F_1^3 F_2 F_3 + 36 F_1^2 F_2^3  
\\ &\quad 
- 34 F_1^2 F_3^2 - 48 F_1 F_2^2 F_3 - 9 F_2^4 + 30 F_2 F_3^2 \bigr), 
\label{eq:ex:N=3-discr-allrels-F1-K32}
\end{split}
\\
\begin{split}
\{ F_2, \K_{2,1} \} & = - \frac{1}{3} (F_1^4 + 8 F_1 F_3) F_2 + 2 F_1^2 F_2^2 - F_2^3 + 2 F_3^2 + \frac{1}{3} \imath h \chi^{-1} \bigl( F_1^5 F_2 - F_1^4 F_3 
\\ &\quad  - 5 F_1^3 F_2^2 + 11 F_1^2 F_2 F_3 - 8 F_1 F_3^2 + 2 F_2^2 F_3 \bigr),
\end{split} 
\\ 
\begin{split}
\{ F_2, \K_{3,1} \} & =  \frac{1}{3} \left( -F_1^5 F_2 + F_1^4 F_3 + 5 F_1^3 F_2^2 - 11 F_1^2 F_2 F_3 + 8 F_1 F_3^2 - 2 F_2^2 F_3 \right)
\\ & \quad 
+  \frac{1}{18} \imath h \chi^{-1} \bigl( -F_1^8 + 15 F_1^6 F_2 - 16 F_1^5 F_3 - 51 F_1^4 F_2^2 + 108 F_1^3 F_2 F_3 + 9 F_1^2 F_2^3 
\\ & \quad 
- 64 F_1^2 F_3^2 - 12 F_1 F_2^2 F_3 + 12 F_2 F_3^2 \bigr),   
\end{split}
\\
\begin{split}
	\{ F_2, \K_{3,2} \} &=  \frac{1}{18} \bigl( F_1^8 - 12 F_1^6 F_2 + 10 F_1^5 F_3 
	 + 42 F_1^4 F_2^2 - 66 F_1^3 F_2 F_3 + 34 F_1^2 F_3^2 
	 \\ &\quad  - 36 F_1^2 F_2^3 + 48 F_1 F_2^2 F_3 + 9 F_2^4 - 30 F_2 F_3^2 \bigr) 
	 + \frac{1}{18} \imath h \chi^{-1} \bigl( -F_1^9 + 11 F_1^7 F_2 
	 \\ &\quad  - 10 F_1^6 F_3 - 33 F_1^5 F_2^2 + 56 F_1^4 F_2 F_3 + 15 F_1^3 F_2^3 - 28 F_1^3 F_3^2 - 12 F_1^2 F_2^2 F_3 
	 \\ &\quad  - 4 F_1 F_2 F_3^2 - 6 F_2^3 F_3 + 12 F_3^3 \bigr),  
	 \label{eq:ex:N=3-discr-allrels-F2-K32}
\end{split} 
\\
\begin{split}
\{ F_3, \K_{2,1} \} & =  \frac{1}{2} \left( -F_1^5 F_2 + F_1^4 F_3 + 5 F_1^3 F_2^2 - 11 F_1^2 F_2 F_3 + 8 F_1 F_3^2 - 2 F_2^2 F_3 \right)
\\ &\quad 
+ \frac{1}{4} \imath h \chi^{-1} \bigl( F_1^6 F_2 - 2 F_1^5 F_3 - 3 F_1^4 F_2^2 + 14 F_1^3 F_2 F_3 - 9 F_1^2 F_2^3 - 10 F_1^2 F_3^2 
 \\ &\quad  + 12 F_1 F_2^2 F_3 + 3 F_2^4 - 6 F_2 F_3^2 \bigr),	
\end{split}
\\ 
\begin{split}
\{ F_3, \K_{3,1} \} & = \frac{1}{12} \bigl( F_1^8 - 15 F_1^6 F_2 + 16 F_1^5 F_3 + 51 F_1^4 F_2^2 - 108 F_1^3 F_2 F_3 
 + 64 F_1^2 F_3^2 - 9 F_1^2 F_2^3  \\ & \quad  + 12 F_1 F_2^2 F_3 - 12 F_2 F_3^2 \bigr)
 + \frac{1}{12} \imath h \chi^{-1} \bigl( -F_1^9 + 12 F_1^7 F_2 
  - 13 F_1^6 F_3 - 33 F_1^5 F_2^2 \\ &\quad  + 72 F_1^4 F_2 F_3 - 6 F_1^3 F_2^3 - 40 F_1^3 F_3^2 + 15 F_1^2 F_2^2 F_3 - 12 F_1 F_2 F_3^2 + 6 F_2^3 F_3 \bigr), 
\end{split}
\\
\begin{split}
\{ F_3, \K_{3,2} \} & =  \frac{1}{12} \bigl( F_1^9 - 11 F_1^7 F_2 + 10 F_1^6 F_3 + 33 F_1^5 F_2^2 - 56 F_1^4 F_2 F_3 - 15 F_1^3 F_2^3 + 28 F_1^3 F_3^2  \\ & \quad  + 12 F_1^2 F_2^2 F_3 + 4 F_1 F_2 F_3^2 + 6 F_2^3 F_3 - 12 F_3^3 \bigr)
 + \frac{1}{24} \imath h \chi^{-1} \bigl( -F_1^{10} + 9 F_1^8 F_2 
 \\ & \quad - 10 F_1^7 F_3 - 12 F_1^6 F_2^2 + 36 F_1^5 F_2 F_3 - 48 F_1^4 F_2^3 - 22 F_1^4 F_3^2 + 90 F_1^3 F_2^2 F_3 
 \\ & \quad + 45 F_1^2 F_2^4 - 72 F_1^2 F_2 F_3^2 - 60 F_1 F_2^3 F_3 - 9 F_2^5 + 24 F_1 F_3^3 + 30 F_2^2 F_3^2 \bigr), 
 \label{eq:ex:N=3-discr-allrels-F3-K32}
\end{split}
\\
\begin{split}
 \{ \K_{2,1}, \K_{3,1} \} & =  \frac{1}{6} \bigl( 3 F_1^4 \K_{3,1} - 4 F_1^3 F_2 \K_{2,1} - 15 F_1^2 F_2 \K_{3,1} + 12 F_1^2 F_3 \K_{2,1} + 6 F_1 F_2^2 \K_{2,1} \\ & \quad  + 12 F_1 F_2 \K_{3,2} + 12 F_2^2 \K_{3,1} - 14 F_2 F_3 \K_{2,1} - 12 F_3 \K_{3,2} \bigr) 
 \\ & \quad 
 + \frac{1}{4} \imath h \chi^{-1} \biggl( \biggl( F_1^6 - \frac{16}{3} F_1^4 F_2 + \frac{8}{3} F_1^3 F_3 + 7 F_1^2 F_2^2 - \frac{20}{3} F_1 F_2 F_3 - 4 F_2^3 + \frac{16}{3} F_3^2 \biggr) \K_{2,1} 
 \\ & \quad  + ( -2 F_1^5 + 10 F_1^3 F_2 - 8 F_1^2 F_3 - 4 F_1 F_2^2 + 4 F_2 F_3 ) \K_{3,1}
  \\ & \quad  + \frac{4}{3} \bigl( F_1^4 - 6 F_1^2 F_2 + 8 F_1 F_3 - 3 F_2^2 \bigr) \K_{3,2} \biggr), 
\end{split}
\\ 
\begin{split}
\{ \K_{2,1}, \K_{3,2} \} & =  \frac{1}{12} \bigl( -3 F_1^6 \K_{2,1} + 6 F_1^5 \K_{3,1} + 13 F_1^4 F_2 \K_{2,1} - 4 F_1^4 \K_{3,2} - 30 F_1^3 F_2 \K_{3,1} - 8 F_1^3 F_3 \K_{2,1} 
\\ & \quad  - 3 F_1^2 F_2^2 \K_{2,1} + 30 F_1^2 F_2 \K_{3,2} + 30 F_1^2 F_3 \K_{3,1} - 4 F_1 F_2 F_3 \K_{2,1} - 56 F_1 F_3 \K_{3,2} 
\\ & \quad  + 9 F_2^3 \K_{2,1} + 30 F_2^2 \K_{3,2} - 6 F_2 F_3 \K_{3,1} - 4 F_3^2 \K_{2,1} \bigr)
+ \frac{1}{9} \imath h \chi^{-1} \biggl( \biggl( F_1^7 - \frac{3}{2} F_1^5 F_2 
\\ & \quad + \frac{1}{4} F_1^4 F_3 - 12 F_1^3 F_2^2 + \frac{57}{2} F_1^2 F_2 F_3 - \frac{9}{2} F_1 F_2^3 - 17 F_1 F_3^2 + \frac{21}{4} F_2^2 F_3 \biggr) \K_{2,1} 
\\ & \quad + \left( -\frac{9}{4} F_1^6 + \frac{27}{4} F_1^4 F_2 - 9 F_1^3 F_3 + \frac{81}{4} F_1^2 F_2^2 - 27 F_1 F_2 F_3 - \frac{27}{4} F_2^3 + 18 F_3^2 \right) \K_{3,1} 
\\ & \quad + 3 \biggl( F_1^5 - 5 F_1^3 F_2 + \frac{13}{2} F_1^2 F_3 - 3 F_1 F_2^2 + \frac{1}{2} F_2 F_3 \biggr) \K_{3,2} \biggr),	
\end{split}
\\
\begin{split}
\{ \K_{3,1}, \K_{3,2} \} & =  \frac{1}{18} \bigl( -2 F_1^7 \K_{2,1} + 6 F_1^6 \K_{3,1} - 12 F_1^5 \K_{3,2} - 21 F_1^4 F_2 \K_{3,1} - 2 F_1^4 F_3 \K_{2,1} + 42 F_1^3 F_2^2 \K_{2,1} 
\\ &\quad  + 66 F_1^3 F_2 \K_{3,2} + 30 F_1^3 F_3 \K_{3,1} - 45 F_1^2 F_2^2 \K_{3,1} - 72 F_1^2 F_2 F_3 \K_{2,1} - 78 F_1^2 F_3 \K_{3,2} 
\\ &\quad + 48 F_1 F_2 F_3 \K_{3,1} + 40 F_1 F_3^2 \K_{2,1} + 18 F_2^3 \K_{3,1} - 6 F_2^2 F_3 \K_{2,1} + 24 F_2 F_3 \K_{3,2} - 36 F_3^2 \K_{3,1} \bigr)
\\ & \quad  +  \frac{1}{18} \imath h \chi^{-1} \biggl( \bigl( -F_1^8 + 20 F_1^6 F_2 - 16 F_1^5 F_3 - 78 F_1^4 F_2^2 + 128 F_1^3 F_2 F_3 + 36 F_1^2 F_2^3 
\\ & \quad  - 64 F_1^2 F_3^2 - 48 F_1 F_2^2 F_3 - 9 F_2^4 + 32 F_2 F_3^2 \bigr) \K_{2,1} - \frac{3}{2} \bigl( F_1^7 + 2 F_1^5 F_2 + 2 F_1^4 F_3 
\\ & \quad  - 33 F_1^3 F_2^2 + 46 F_1^2 F_2 F_3 + 6 F_1 F_2^3 - 24 F_1 F_3^2 \bigr) \K_{3,1} + \frac{9}{2} \biggl( F_1^6 - \frac{14}{3} F_1^4 F_2 
\\ & \quad  + \frac{16}{3} F_1^3 F_3 - F_1^2 F_2^2 - \frac{4}{3} F_1 F_2 F_3 - 2 F_2^3 + \frac{8}{3} F_3^2 \biggr) \K_{3,2} \biggr). 
\end{split}
\end{align}	
\end{subequations}  

Note that, as expected, $ F_1 $ is no longer a central element of the algebra, see~\cref{eq:ex:N=3-discr-allrels-F1-K21,eq:ex:N=3-discr-allrels-F1-K31,eq:ex:N=3-discr-allrels-F1-K32}. Moreover, by setting $h=0 $ in~eqs.~\eqref{eq:ex:N=3-discr-allrels}, we recover~\eqref{eq:ex:N=3-cont:all-rels} from~\cref{ex:N=3-cont}; hence the symmetry algebra $\widetilde{\Alg}^{(3)} $ can be indeed  understood as a deformation of $\Alg^{(3)} $ with respect to the time step $h$. Furthermore, the degree of $\widetilde{\Alg}^{(3)} $  is $10 $, provided by, for example,~\cref{eq:ex:N=3-discr-allrels-F3-K32}. This is in line with~\cref{prop:Symalg-N-discr,cor:deformation}. 

Finally, following~\cref{prop:discrete-additional-constants,prop:discr-closure-rels}, one can write a discrete-time functional relation among the elements of $\widetilde{\genset}^{(N)} $, analogous to~\eqref{eq:ex:N=3-cont:funrel}, as follows:    
\begin{equation}
	F_{3} \K_{3,1} - F_{2} \K_{3,2} - \left( \frac{1}{6} F_1^4 - F_1^2 F_2 + \frac{4}{3} F_1 F_3 + \frac{1}{2} F_2^2  \right) \K_{2,1} =0.   
\end{equation}
\end{example}

In summary, the deformation phenomenon constitutes a nontrivial feature of the model and its discretization.   In the concluding section, we will discuss the results and their implications, and compare them with the nonrelativistic case.

\section{Discussion and outlook}
\label{sec:concl}
In this work, we investigated the algebraic structures of the superintegrable rational Ruijsenaars--Schneider model~\cite{RS1986,AyadiFeher2010} and its integrable discretization, built in~\cite{NijhoffRagnisco1996}. Through the construction of additional integrals of motion, we showed that this discretization is superintegrable, similarly to its continuous counterpart. We determined the complete structure of the symmetry algebras of the continuous- and discrete-time $N$-body RS models, and demonstrated that the symmetry algebra in the discrete case is a deformation of the continuous one with respect to the time step $h$. 
Let us now compare our findings with the analogous results for the  (nonrelativistic) Calogero--Moser case; see~\cite{DGL2026Calogero}.

First, in the~\textbf{continuous-time case}, the structures of the symmetry algebras of the RS model and the Calogero--Moser model are similar to each other, up to a choice of the central element ($F_2 $ in the Calogero--Moser case, corresponding to the quadratic (in momenta) Hamiltonian function in Newtonian mechanics, and $F_1 $ in the relativistic case). Moreover, the same as in~\cite{DGL2026Calogero}, the algebra is determined by the structure relations and the closure relations involving Bell polynomials. This is in line with the note in~\cite[section IV]{DGL2026Calogero}, suggesting that similar formulas should hold true for other models that admit matrix traces as integrals of motion. However, while for $ N$-body $(N >2) $ Calogero--Moser model the degree of a polynomial algebra was $2N -1$, for the rational RS model the degree becomes $3N $, due to the relations~\eqref{eq:JFandJJ} and~\eqref{eq:prop:cont-formal-rels}. This fact suggests that there might exist a class of generalized superintegrable systems with $N $ degrees of freedom with symmetry algebras of higher degree, and both Calogero--Moser and rational RS systems are members of this class. 

Furthermore, although the RS model is considered as ``relativistic'' in a special sense (see~\cite{Braden1997}  and~\cref{sec:bg-RS-review}), this algebraic perspective might be used in future works to clarify certain aspects of other relativistic generalizations, connecting them to the existing developments in superintegrability in the relativistic setting; see~\cite{HeinzlIlderton2017,Ansell2018}. Note that the fact that the algebraic structures of the Calogero--Moser and RS models share common features is known: see, \eg{},~\cite{SurisWhy1996}. Our construction confirms that this can also be said about the symmetry algebras of the models in the context of superintegrability.  

Second, the modified integrals of motion, ensuring the maximal superintegrability of the discrete-time RS model, are constructed in a similar way as for the discretization of the Calogero--Moser system~\cite{nijhoffTimediscretizedVersionCalogeroMoser1994}: the construction follows the steps of~\cite{ujinoAdditionalConstantsMotion2008}, an extension of \cite{wojciechowski1983} to the discrete-time case. This allowed us to generalize the further procedure from~\cite{DGL2026Calogero} to the symmetry algebras of the RS model and its discretization. 

Third, using the modified integrals of motion, we obtained that the polynomial symmetry algebra~$\widetilde{\Alg}^{(N)} $ of the \textbf{discrete-time model} is a deformation of the symmetry algebra $\Alg^{(N)} $ of the continuous-time model with respect to the discretization parameter $h$, and the contraction $ \widetilde{\Alg}^{(N)} 
 \to  \Alg^{(N)}  
$ is given simply by taking the continuous limit $ h \to 0  $. Moreover, the degree of the polynomial algebra is also not preserved: it is increased from $3N $ to $3N +1$. Since the discrete Calogero--Moser model exhibits the same phenomenon, the results presented in this work can be considered as a relativistic generalization of~\cite{DGL2026Calogero}.  In both cases, the (formal) deformed structure relations are of the form:
\begin{subequations}
\begin{alignat}{3}
 \{ F_k, \K_{m,n} \} & = & \ \{ F_k, K_{m,n} \} - \varkappa (h) \,   
	& \Psi^{(1)}_{k,m,n \phantom{,i} } ( F_1, \ldots, F_N ),
 \\ 
 \{ \K_{i,j} , \K_{m,n} \} & = & \  \{ K_{i,j} , K_{m,n} \}  - \varkappa (h) \, 
	& \Psi^{(2)}_{i,j,m,n} ( F_1, \ldots, F_N, \K_{2,1}, \ldots, \K_{N, N-1} ),
\end{alignat} 
\end{subequations}
where $\Psi^{(1)}_{-,-,-} (\hyphen) $ and $\Psi^{(2)}_{-,-,-,-} (\hyphen) $ are scalar polynomial functions of their arguments, and:
\begin{itemize}
	\item $\varkappa (h)  \sim \sqrt{h} $ for the discrete-time Calogero--Moser model, 
	\item $ \varkappa (h) \sim h  $ for the discrete-time RS model.  
\end{itemize}
Therefore, we can observe that the relativistic generalization squares the deformation parameter, which follows from the direct construction of the integrals $\K_{m,n} $ in both cases. 

Finally, let us speculate on potential applications of the obtained results. As was mentioned in~\cref{sec:intro}, the construction of new discretizations retaining superintegrability is, in general, a challenging and delicate task. Here, in principle, the symmetry algebra formalism might provide a unifying framework for both continuous- and discrete-time systems. In particular, one may think about the inverse problem, \ie{}, can one build a new superintegrable discretization by deforming a polynomial symmetry algebra of a given continuous-time system in a systematic manner? However,  any attempt to develop such a theory requires many more examples than~\cite{DGL2026Calogero} and the present work. In this regard, further directions include studying other superintegrable systems and their symmetry algebras in the context of time-discretization, since this is expected to provide new insights into the algebraic approach to the discretization problem itself.  

\newpage

\section*{Acknowledgements}

The author is grateful to \mbox{Prof. Giorgio Gubbiotti, Dr. Danilo Latini,  and Prof. Stefano {Ansoldi}} for reading the final draft of the manuscript and providing valuable feedback, and Prof.~L\'aszl\'o Feh\'er for his insightful comments on the first version of the preprint. Moreover, I~thank Prof.~Naruhiko Aizawa for his hospitality during my research visit at Osaka Metropolitan University and for the opportunity to present these results at the workshop at~OMU.

This work was supported by  
the research project Mathematical
Methods in NonLinear Physics (MMNLP), 
Gruppo-4 Fisica Teorica of INFN, and partially by the National Group of Mathematical Physics (GNFM) 
of the Italian 
Institute for High Mathematics (INdAM). 
The author also acknowledges 
the support of the Ph.D.~program of the
Universit\`a degli Studi di~Udine.

\footnotesize 
\bibliographystyle{alphaurl} 

\bibliography{bib_dRS.bib}

\end{document}